\newcommand{\bR}{ {\boldsymbol R} }
\newcommand{\bt}{ {\boldsymbol t} }
\newcommand{\bepsilon}{ {\boldsymbol \epsilon} }
\definecolor{Gray}{gray}{0.9}
\newcommand{\R}{\mathbb{R}}
\newcommand{\x}{u}
\newcommand{\uu}{u}
\newcommand{\vv}{v}
\newcommand{\spatloc}{u}
\newcommand{\stage}{n}
\newcommand{\candidates}{\mathcal{J}}
\newcommand{\EIBV}{\operatorname{EIBV}}
\newcommand{\currentEIBV}{\operatorname{EIBV}_{[\stage]}}
\newcommand{\IBV}{\operatorname{IBV}}
\newcommand{\currentIBV}{\operatorname{IBV}_{[\stage]}}
\newcommand{\currentEEMV}{\operatorname{EEMV}_{[\stage]}}
\newcommand{\EMV}{\operatorname{EMV}}
\newtheorem*{criterion}{Criterion}
\newcommand{\T}{T}
\newcommand{\es}{\Gamma}
\newcommand{\eibv}{\mathrm{EIBV}}
\newcommand{\mes}{\nu}
\newcommand{\no}{p}
\newcommand{\domain}{\mathcal{M}}
\newcommand{\cov}{\operatorname{Cov}}
\newcommand{\gp}[1][]{
    \ifthenelse{\isempty{#1}}
    {Z}
    {Z_{#1}}
}
\newtheorem{propo}{Proposition}
\newtheorem{remark}{Remark}
\newtheorem*{remark*}{Remark}
\newcommand{\productMeasure}{
\mathrm{d}\nu^{\otimes}\left(\bm{u}\right)}
\newcommand{\covV}{C_{V}}
\newcommand{\covN}{C}
\newcommand{\jointExcuProb}{
    \mathbb{P}\left(
    \gp[\bm{u}]\in T^r \right)
}
\newcommand{\meanUU}{
    \mu(\bm{u})
}
\newcommand{\covUU}{
    K\left(\bm{u},\bm{u}\right)
}
\newcommand{\currentProba}[1]{\mathbb{P}_{[\stage]}
\left(#1\right)}
\newcommand{\futureProba}[1]{\mathbb{P}_{[\stage + 1]}
\left(#1\right)}
\newcommand{\currentExp}[1]{\mathbb{E}_{[\stage]}
\left[#1\right]}
\newcommand{\currentMean}[1]{\mu_{[\stage]}
\left(#1\right)}
\newcommand{\futureMean}[1]{\mu_{[\stage + 1]}
\left(#1\right)}
\newcommand{\currentCov}[1]{K_{[\stage]}
\left(#1\right)}
\newcommand{\futureCov}[1]{K_{[\stage + 1]}
\left(#1\right)}
\DeclarePairedDelimiterX{\expectarg}[1]{[}{]}{%
  \ifnum\currentgrouptype=16 \else\begingroup\fi
  \activatebar#1
  \ifnum\currentgrouptype=16 \else\endgroup\fi
}
\newcommand{\innermid}{\nonscript\;\delimsize\vert\nonscript\;}
\newcommand{\activatebar}{%
  \begingroup\lccode`\~=`\|
  \lowercase{\endgroup\let~}\innermid
  \mathcode`|=\string"8000
}
\begin{document}

\begin{frontmatter}

\title{
Learning Excursion Sets of Vector-valued Gaussian Random Fields for
Autonomous Ocean Sampling
}

\runtitle{Learning Excursion Sets for Autonomous Data Collection}

\begin{aug}
\author{\fnms{Trygve Olav} \snm{Fossum}\thanksref{t1,t2}, \corref{} \ead[label=e1]{trygve.o.fossum@ntnu.no}}
\author{\fnms{Cédric} \snm{Travelletti}\thanksref{t3}, \corref{} \ead[label=e2]{cedric.travelletti@stat.unibe.ch}}
\author{\fnms{Jo} \snm{Eidsvik}\thanksref{t4}, \ead[label=e3]{jo.eidsvik@ntnu.no}}
\author{\fnms{David} \snm{Ginsbourger}\thanksref{t3}, \ead[label=e4]{david.ginsbourger@stat.unibe.ch}}
\and
\author{\fnms{Kanna} \snm{Rajan}\thanksref{t5}. \ead[label=e5]{kanna.rajan@fe.up.pt}}

\affiliation[t1]{Department of Marine Technology, The Norwegian University of Science and Technology (NTNU), Trondheim, Norway.} 
\affiliation[t2]{Centre for Autonomous Marine Operations and Systems, NTNU.}
\affiliation[t3]{Institute of Mathematical Statistics and Actuarial Science, University of Bern, Switzerland.}
\affiliation[t4]{Department of Mathematical Sciences, NTNU.}
\affiliation[t5]{Underwater Systems and Technology Laboratory, Faculty of Engineering, University of Porto, Portugal.}

\address{\\Trygve Olav Fossum \\Department of Marine Technology\\ Otto Nielsens veg. 10, 7491 Trondheim\\ Norway\\
\printead{e1}}
\address{Cédric Travelletti\\ Institute of Mathematical Statistics and Actuarial Science \\ University of Bern \\
Switzerland.
\printead{e2}}
\address{Jo Eidsvik\\Department of Mathematical Sciences\\ Hogskoleringen 1, 7491 Trondheim\\ Norway\\ \printead{e3}}
\address{David Ginsbourger\\ Institute of Mathematical Statistics and Actuarial Science \\ University of Bern \\
Switzerland.
\printead{e4}}
\address{Kanna Rajan\\Underwater Systems and Technology Laboratory,
  Faculty of Engineering,\\ Rua Dr. Roberto Frias\\ University of Porto, Portugal\\
\printead{e5}}

\runauthor{TO. Fossum et al.}
\end{aug}

\begin{abstract}
  Improving and optimizing oceanographic sampling is a crucial task
  for marine science and maritime resource management. Faced with
  limited resources in understanding processes in the water-column,
  the combination of statistics and autonomous systems provide new
  opportunities for experimental design.
  In this work we develop efficient spatial sampling methods for
  characterizing regions defined by simultaneous exceedances above
  prescribed thresholds of several responses, with an application
  focus on mapping coastal ocean phenomena based on temperature and
  salinity measurements.
  Specifically, we define a design criterion based on uncertainty in
  the excursions of vector-valued Gaussian random fields, and derive
  tractable expressions for the expected integrated Bernoulli variance
  reduction in such a framework. We demonstrate how this criterion can
  be used to prioritize sampling efforts at locations that are
  ambiguous, making exploration more effective.  We use simulations to
  study and compare properties of the considered approaches, followed
  by results from field deployments with an autonomous underwater
  vehicle as part of a study mapping the boundary of a river
  plume. The results demonstrate the potential of combining
  statistical methods and robotic platforms to effectively inform and
  execute data-driven environmental sampling.
\end{abstract}

\begin{keyword}
\kwd{Excursion Sets}
\kwd{Gaussian Processes}
\kwd{Experimental Design}
\kwd{Autonomous robots}
\kwd{Ocean Sampling}
\kwd{Adaptive Information Gathering}
\end{keyword}

\end{frontmatter}
\section{Introduction}

Motivated by the challenges related to efficient data collection
strategies for our vast oceans, we combine spatial statistics, design
of experiments and marine robotics in this work.  The
multidisciplinary efforts enable information-driven data collection in
regions of high-interest.

\subsection{Oceanic data collection and spatial design of experiments}

Monitoring the world's oceans has gained increased importance in light
of the changing climate and increasing anthropogenic impact. Central
to understanding the changes taking place in the upper water-column is
knowledge of the bio-geophysical interaction driven by an
agglomeration of physical forcings (e.g. wind, topography, bathymetry,
tidal influences, etc.) and incipient micro-biology driven by
planktonic and coastal anthropogenic input, such as pollution and
agricultural runoff transported into the ocean by rivers.  These often
result in a range of ecosystem-related phenomena such as blooms and
plumes, with direct and indirect effects on society \citep{ryan2017}. One of the
bottlenecks in the study of such phenomena lies however in the lack of
observational data with sufficient resolution. Most of this
\emph{undersampling} can be attributed to the large spatio-temporal
variations in which ocean processes transpire, prompting the need for
effective means of data collection.  By \emph{sampling}, we refer here
primarily to the design of observational strategies in the spatial
domain with the aim to pursue measurements with high scientific
relevance.  Models and methods from spatial statistics and experimental design can clearly contribute to this sampling challenge.

Data collection at sea has typically been based on static buoys,
floats, or ship-based methods, with significant logistical limitations
that directly impact coverage and sampling resolution. Modern methods
using satellite remote-sensing provide large-scale coverage but have
limited resolution, are limited to sensing the surface, and are
impacted by cloud cover. Numerical ocean models similarly find it
challenging to provide detail at fine scale \citep{Lermusiaux:2006},
and also come with computational costs that can be limiting. The
advent of robust mobile robotic platforms \citep{Bellingham07} has
resulted in significant contributions to environmental monitoring and
sampling in the ocean (Fig.~\ref{fig:envir1}). In particular,
autonomous underwater vehicles (AUVs) have advanced the state of data
collection and consequently have made robotics an integral part of
ocean observation \citep{das11b,Das2015,fossuminformation,fossum18b}.

\begin{figure}[!h] 
  \centering 
  \subfigure[Illustration of a range of ocean sensing opportunities.]{\includegraphics[width =
    0.49\textwidth]{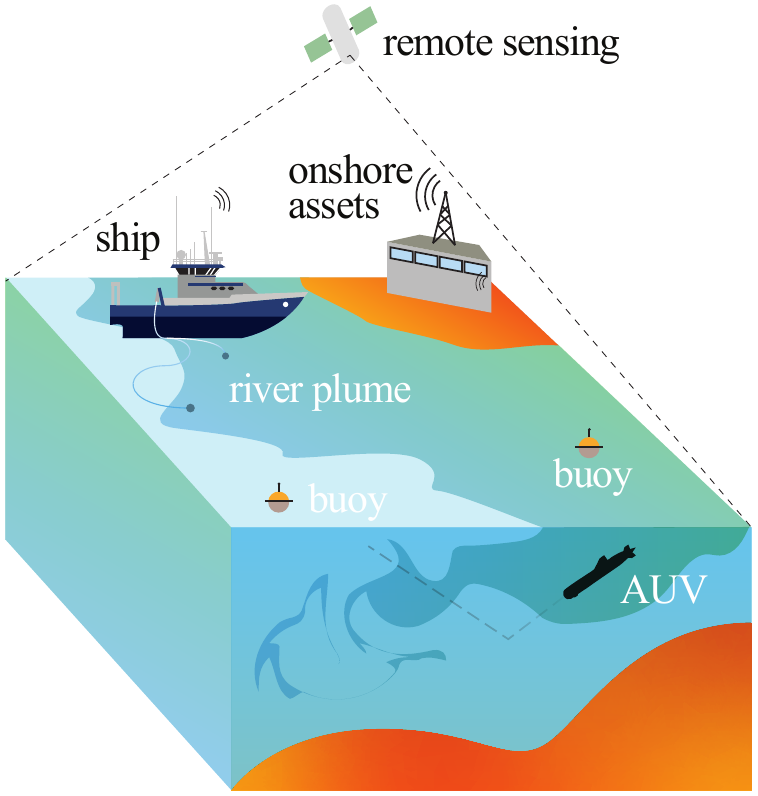}\label{fig:envir1}}
  \hfill
  \subfigure[Frontal patterns off of the Nidelva river, Trondheim, Norway.]{\includegraphics[width =
    0.49\textwidth]{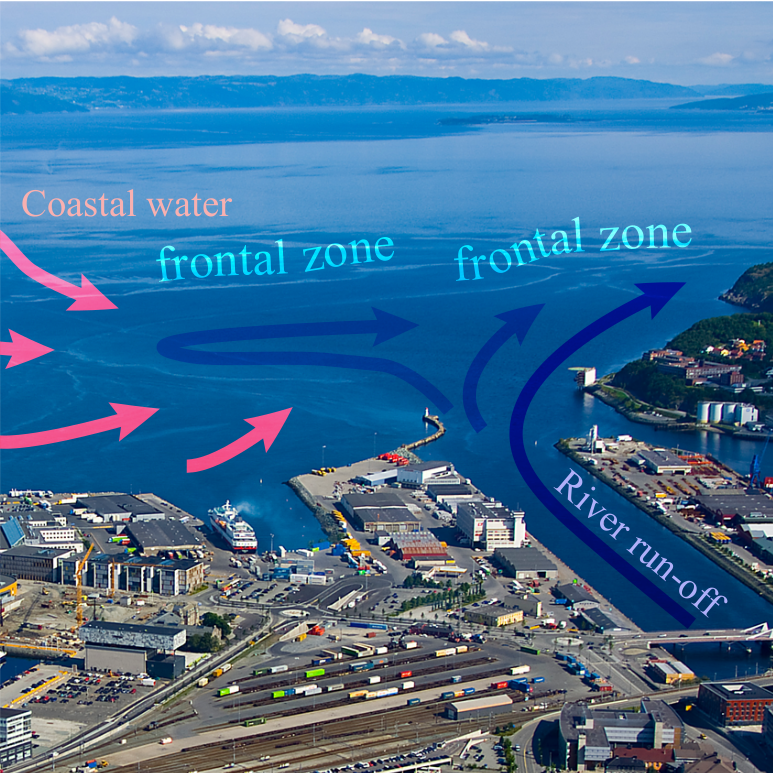}\label{fig:nidelven}}
  \caption{\ref{fig:envir1} Traditional ocean observation based on 
    ship-based sampling has been augmented by autonomous
    robotic vehicles such as AUVs. 
    \ref{fig:nidelven} The interaction of river and ocean creates
    processes that are challenging to map, where the combination of
    statistics and robotics can play a vital role in enabling more
    effective oceanographic observation.}
  \label{fig:envir} \end{figure}

Surveys with AUVs are usually limited to observations along fixed
transects that are pre-scripted in mission plans created manually by a
human operator. Missions can be specified operating on a scale of
hundreds of meters to tens of kilometers depending on the scientific
context. Faced with limited coverage capacity, a more effective
approach is to instead use onboard algorithms to continuously
evaluate, update, and refine future sampling locations, making the
data collection \emph{adaptive}.  In doing so, the space of sampling
opportunities is still limited by a waypoint graph, which forms a
discretization of the search domain where the AUV can navigate;
however the AUV can now modify its path at each waypoint based on
in-situ measurements and calculations onboard using onboard
deliberation \citep{py10,Rajan12,Rajan12b}.  Full numerical ocean
models based on complex differential equations cannot be run onboard
the AUV with limited computational capacity, and statistical models
relying on random field assumptions are relevant as a means to
effectively update the onboard model from in-situ data, and to guide
AUV data collection trajectories.

The work presented here is primarily inspired by a case study
pertaining to using an AUV for spatial characterization of a frontal
system generated by a river plume. Fig.~\ref{fig:nidelven} shows the
survey area in Trondheim, Norway, where cold freshwater enters the fjord from a river, creating a strong gradient in both temperature and
salinity. Because of the local topography and the Coriolis force the
cold fresh water tends to flow east. Depending on the variations in
river discharge, tidal effects, coastal current and wind, this
boundary often gets distorted, and knowledge about its location is
highly uncertain, making deterministic planning challenging. The goal
is therefore to use AUV measurements for improved description of the
interface between fresh and oceanic waters.  It is often not possible
to sample the biological variables of fundamental interest in such AUV
operations, but off-the-shelf instruments provide temperature and
salinity measurements which serve as proxys for the underlying
biological phenomenon. With the help of a vector-valued random field
model for temperature and salinity, one can then aim to describe the
plume.  The goal of plume characterization, in this way, relates to
that of estimating some regions of the domain, typically excursion
sets (ESs), when implicitly defined by the vector-valued random field.
In our context of environmental sampling, the joint salinity and
temperature excursions of a river plume help characterize the
underlying bio-geochemical processes
\citep{hopkins2013detection,Pinto2018}. Motivating examples for ESs of
multivariate processes are also abundant in other contexts, for
instance in medicine, where physicians do not rely solely on a single
symptom but must see several combined effects before making a
diagnosis.

The questions tackled here hence pertain to the broader area of
spatial data collection for vector-valued random fields.
Given the operational constraints on AUV movements and the fact that
surveys rely on successive measurements along a trajectory, addressing
corresponding design problems calls for sequential strategies.  Our
main research angle in the present work is to extend sequential design
strategies from the world of spatial statistics and computer
experiments to the setting of both vector-valued observational data
and experimental designs for feasible robotic trajectories. We
leverage and extend recent progress in expected uncertainty reduction
for ESs of Gaussian random fields (GRFs) in order to address this
research problem. 
We briefly review recent advances in targeted sequential design of
experiments based on GRFs before detailing other literature related to
AUV sampling and our contributions prior to outlining the rest of the
paper.

\subsection{Random field modeling and targeted sequential design of experiments}
  
While random field modeling has been one of the main topics throughout the history of spatial statistics \citep{Krige1951a,Stein1999}, even
for vector-valued random field models with associated prediction
approaches such as co-Kriging \citep[See, e.g.,][]{Wackernagel2003},
there has lately been a renewed interest for random field models in
the context of static or sequential experimental design, be it in the context of spatial data collection \citep{Mueller2007} or in
simulation experiments \citep{Santner.etal2003}. As detailed in
\cite{Ginsbourger2018}, GRF models have been used in particular as a
basis to sequential design of simulations dedicated to various goals
such as global optimization and set estimation. Of particular relevance to our context,
\cite{Bect.etal2012} focuses on strategies to reduce uncertainties related to volumes of excursion exceeding a prescribed threshold, while \cite{chevalier2014fast} concentrates on making the latter strategies computationally efficient and batch-sequential. Rather than focusing on excursion volumes, approaches were investigated in
\cite{French.Sain2013,Chevalier.etal2013b,Bolin.Lindgren2015,Azzimonti.etal2016}
with ambitions of estimating sets themselves. Recently, sequential
designs of experiments for the conservative estimation of ESs based on
GRF models were presented in \citep{Azzimonti.etal}.

Surprisingly less attention has been dedicated to sequential
strategies in the case of vector-valued observations. It has been long
acknowledged that co-Kriging could be updated efficiently in the context of sequential data assimilation \citep{Vargas-Guzman1999}, but sequential
strategies for estimating features of vector-valued random fields are
still in their infancy. \cite{LeGratiet.etal2015} used co-Kriging
based sequential designs to multi-fidelity computer codes and
\cite{Poloczek2017} used related ideas for multi-information source
optimization, but not for ES's like we do here. More relevant to our
setting, the PhD thesis \citep[][p.82]{stroh} mentions general
possibilities of stepwise uncertainty reduction strategies for ES's in
the context of designing fire simulations, yet outputs are mainly
assumed independent.

\subsection{Previous work in AUV sampling}

Other statistical work in the oceanographic domain include
\cite{wikle2013modern} focusing on hierarchical statistical models,
\cite{sahu2008space} studying spatio-temporal models for sea surface
temperature and salinity data and \cite{mellucci2018oceanic} looking
at the statistical prediction of features using an underwater glider.
In this work the main focus is not on statistical modeling per se, but
rather on statistical principles and computation underlying efficient
data collection. We combine novel possibilities in marine robotics
with spatial statistics and experimental design to provide useful AUV
sampling designs.

Adaptive in-situ AUV sampling of an evolving frontal feature has been
explored in \cite{fronts11,Smith2016,Pinto2018,costa19}. These
approaches typically use a reactive-adaptive scheme, whereby
exploration does not rely on a statistical model of the environment,
but rather adaptation is based on closing the sensing and actuation
loop. Myopic sampling, i.e. stage-wise selection of the path (on the
waypoint graph), has been used for surveys
\citep{singh2009efficient,Binney2013} that focus largely on reducing
predictive variance or entropy. These criteria are widely adopted in
the statistics literature on spatio-temporal design as well
\citep{bueso1998state,zidek2019monitoring}. However, response variance and entropy being depending only in GRF models on measurement locations and not on response values, criteria based on them only tend to have limited flexibility for
active adaptation of trajectories based on measurement values.  The use
of data-driven adaptive criteria was introduced to include more
targeted sampling of regions of scientific interest in \cite{Low2009}
and \cite{fossuminformation}.

The primary contributions of this work are:

\begin{itemize}
\item Extending uncertainty reduction criteria to
  vector-valued cases.  
\item Closed-form expressions for the expected integrated Bernoulli
  variance (IBV) of the excursions in GRFs. 
\item Algorithms for myopic and multiple-step ahead sequential
  strategies for optimizing AUV sampling with respect to the mentioned
  criteria. 
\item Replicable experiments on synthetic cases with accompanying
  code
\item Results from full-scale field trials running myopic strategies onboard an AUV for the characterization of a river plume. 
\end{itemize}

The remainder of this paper is organized as follows:
Section \ref{sec:ESEP} defines ESs, excursion probabilities (EPs), and
the design criteria connected to the IBV for excursions of
vector-valued GRFs. Section \ref{sec:heuristics} builds on these
assumptions when deriving the sequential design criteria for adaptive
sampling. In both sections properties of the methods are studied using
simulations. Section \ref{sec:case_study} demonstrates the methodology
used in field work characterizing a river plume. Section
\ref{sec:concl_disc} contains a summary and a discussion of future
work.

\section{Quantifying uncertainty on Excursion Sets implicitly defined by GRFs}
\label{sec:ESEP}

Section \ref{sec:bg_and_notation} introduces notation and co-Kriging
equations of multivariate GRFs.  Section \ref{sec:set_uq} presents
uncertainty quantification (UQ) techniques on ESs of GRFs, in
particular the IBV and the excursion measure variance (EMV).  Section
\ref{sec:eibv} turns to the effect of new observations on EMV and IBV,
and semi-analytical expected EMV and IBV over these observations are
derived.
Section \ref{Sec:UnivarEx} illustrates the concepts on a bivariate
example relevant for temperature and salinity in our case.

\subsection{Background, Notation and co-Kriging}
\label{sec:bg_and_notation}

We denote by $\gp$ a vector-valued random field indexed by some
arbitrary domain $\domain$, and assume values of the field at any
fixed location $\x \in \domain$, denoted $\gp[\x]$, to be a
$\no$-variate random vector ($\no\geq 2$). In the river plume
characterization case, $\domain$ is a prescribed domain in
Trondheimsfjord, Norway (for the purpose of our AUV application, a
discretization of a $2$-dimensional domain at fixed depth is
considered), and $\no=2$ with responses of temperature and salinity. A
bivariate GRF model is assumed for $\gp$. To motivate concepts,
Fig.~\ref{fig:real_temp} and \ref{fig:real_sal} shows a realization of such a
vector-valued GRF on $\domain=[0,1]^2$. Fig.~\ref{fig:jointex_roi} represents a by-product of interest derived from these
realizations, namely regions: i) in red, where both temperature and salinity are high (i.e., exceeding respective thresholds), indicative of ocean water, ii) in white, where both temperature and salinity are low, indicative of riverine water, and iii) in light-red, where one variable is above and the other below their respective thresholds, indicative of mixed waters.

\begin{figure}[!b] 
\centering 
\subfigure[Temperature.]{\includegraphics[
height=0.27\textwidth,keepaspectratio]{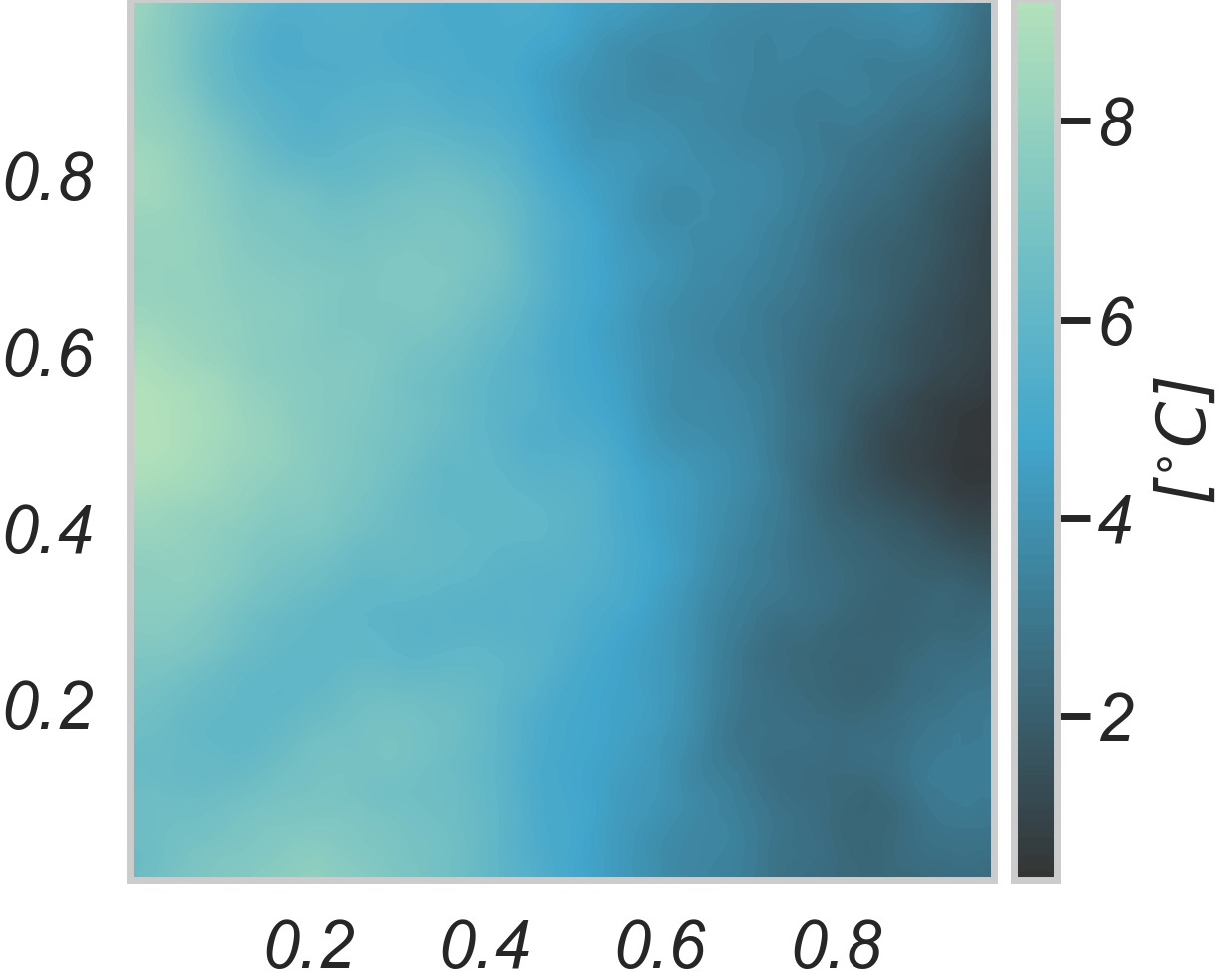}\label{fig:real_temp}}
\subfigure[Salinity.]{\includegraphics[
height=0.27\textwidth,keepaspectratio]{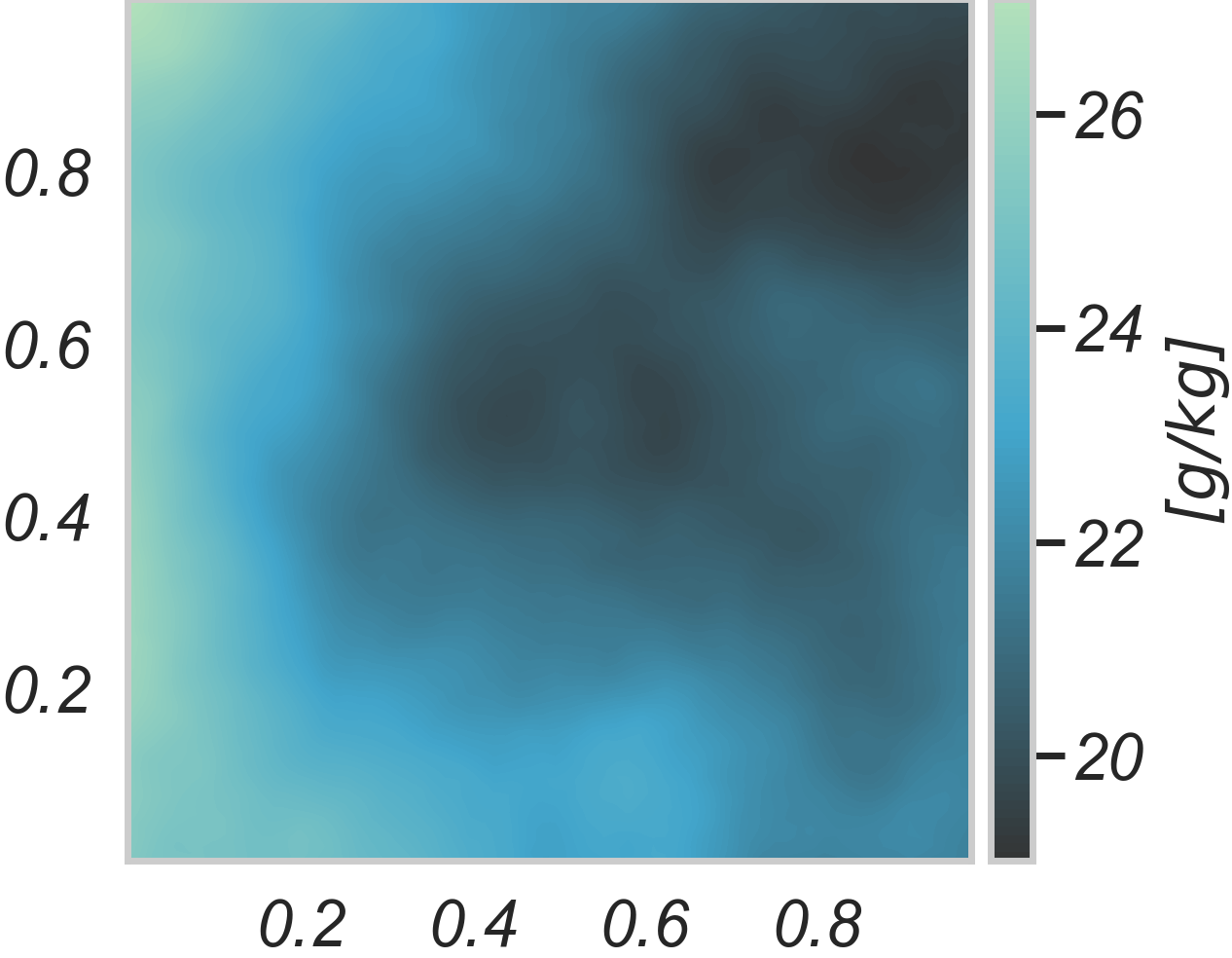}\label{fig:real_sal}}
\subfigure[Regions of interest.]{\includegraphics[
height=0.27\textwidth,keepaspectratio]{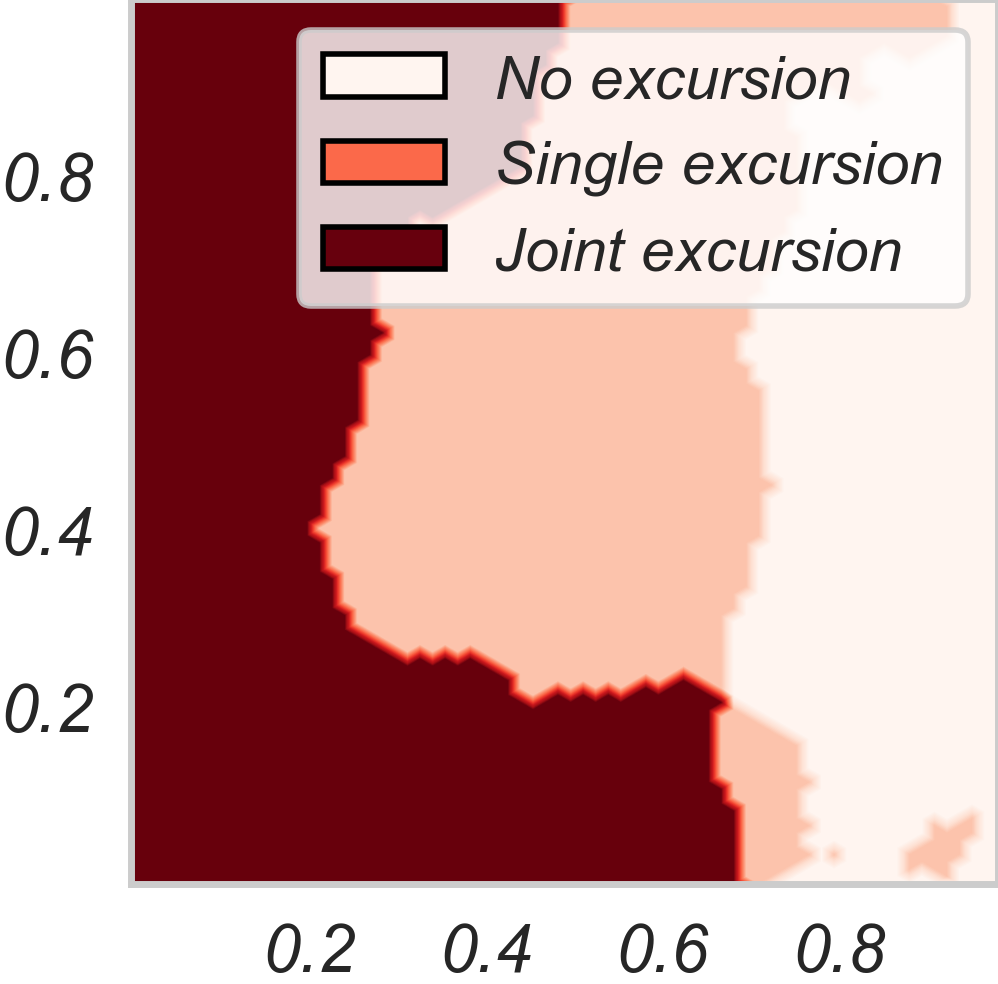}\label{fig:jointex_roi}}
\caption{Realization of a bivariate GRF (display (a) and (b)) and excursion set above some threshold (c). Joint excursion in red and excursion of a single variable in light-red.}
\label{example_excu}
\end{figure}

For the general setting of a $\no$-variate random field, we are
interested in recovering the set of locations $\es$ in the domain for
which the components of $\gp$ lie in some set of specified values
$\T\subset \mathbb{R}^{\no}$; in other words \textit{the pre-image of
  $T$ by $\gp$}:
$$
\es:=\gp^{-1}(\T)=\{\x \in \mathcal{M}: \gp[\x] \in \T\}.
$$
If we assume that $\gp$ has continuous trajectories 
and $T$ is closed, then $\es$ becomes a Random Closed Set
\citep{Molchanov2005} and concepts from the theory of random sets will
prove useful to study $\es$.
Note that while some aspects of the developed approaches do not call
for a specific form of $\T$, we will often, for purposes of
simplicity, stay with the case of orthants
($\T=(-\infty, t_1] \times \dots \times (-\infty, t_{\no}]$ where
$t_1,\dots, t_{\no} \in \R$) as this will allow efficient calculation
of several key quantities. Note that changing some $\leq$ inequalities
to $\geq$ ones would lead to immediate adaptations.

Letting $\gp[\spatloc,\ell]$ denote the $\ell\text{-th}$ component of
$\gp[\spatloc]$ ($1\leq \ell\leq \no$), we use the term
\textit{generalized location} for the couple $x=(\spatloc,\ell)$.
The notation $\gp[x]$ will be used to denote $\gp[\spatloc,\ell]$ 
and
will allow us to think of $\gp$ as a scalar-valued random field indexed by $\domain \times \{1\dots,p\}$, which will give the co-Kriging equations a particularly simple form that parallels the one of univariate Kriging. 
The letters $\spatloc$ and $\ell$ will be used for spatial locations and response indices respectively.
Furthemore, boldface letters will be used to denote concatenated
quantities corresponding to batches of observations.  Given a dataset
consisting of
$q$ observations at spatial locations
$\bm{\spatloc}=(\spatloc_1,\dots,\spatloc_q) \in
\domain^q$ and response indices $\bm{\ell}=(\ell_1,\dots, \ell_q)\in
\lbrace 1, ..., \no\rbrace^q$, 
we use the concatenated notation
\begin{align*}
\bm{x}:=
(x_1,\dots, x_q),~\text{with }x_i=(\spatloc_i,\ell_i).
\end{align*}
We also compactly denote the field values at those different locations by
\begin{align*}
\gp[\bm{x}]:=
\left(\gp[\spatloc_1,\ell_1], ...,
\gp[\spatloc_q,\ell_{q}]\right) \in \mathbb{R}^{q}.
\end{align*}

For a second order random field $(Z_{\spatloc})_{\spatloc \in
  \domain}$ with mean $\mu$ and matrix covariance function $K$,
$\mu$ is naturally extended to $\domain \in \lbrace 1, ...,
\no\rbrace$ into a function of $x=(\spatloc,
\ell)$ and is further straightforwardly vectorized into a function of
$\bm{x}$. As for $K$, it induces a covariance kernel
$k$ on the set of extended locations via $k((\spatloc,
\ell),(\spatloc', \ell'))=K(\spatloc, \spatloc')_{\ell,
  \ell'}$. In vectorized/batch form, $k(\bm{x},
\bm{x}')$ then amounts to a matrix with numbers of lines and columns equal to
the numbers of generalized locations in
$\bm{x}$ and
$\bm{x}'$, respectively. Such
vectorized quantities turn out to be useful in order to arrive at
simple expressions for the co-Kriging equations below.

Given a GRF $\gp$ and observations of some of its components at
locations in the domain, one can predict the value of the field at
some unobserved location $\spatloc\in \domain$ by using the
conditional mean of $\gp[\spatloc]$, conditional on the data. This
coincides with co-Kriging equations, which tell us precisely how to
compute conditional means and covariances.  We will present a general
form of co-Kriging, in the sense that it allows inclusion of several
(batch) observations at a time; observations at a given location
$u \in \domain$ may only include a subset of the components of
$\gp[\spatloc]\in\mathbb{R}^{\no}$ (heterotopic).

Assuming that $n$ batches of observations are available with sizes
$q_1,\dots, q_n$, and that one wishes to predict $\gp[\bm{x}]$ for
some batch of $q\geq 1$ generalized locations
$\bm{x}$, the simple co-Kriging mean then amounts to Kriging with respect to a
scalar-valued GRF indexed by $\domain\times \{1\dots,p\}$:
\begin{equation}\label{eq:cokrig_mean}
\mu_{[n]}(\bm{x})=\mu(\bm{x})+\lambda_{[n]}(\bm{x})^T (\mathbf{z}_{[n]}-\mu(\bm{x})).
\end{equation}
Here, $\mathbf{z}_{[n]}$ stands for the ($\sum_{i=1}^n
q_i$)-dimensional vector of observed (noisy) responses of
$Z$ at all considered generalized locations, and
$\lambda_{[n]}(\bm{x})$ is a vector of weights equal to
$$\left(k(\bm{x}_{[n]}, \bm{x}_{[n]})+\Delta_{[n]} \right)^{-1} k(\bm{x}_{[n]}, \bm{x})
$$
with $\bm{x}_{[n]}=(\bm{x}_1,\dots,
\bm{x}_n)$ and where
$\Delta_{[n]}$ is the covariance matrix of Gaussian-distributed noise
assumed to have affected measurements up to batch
$n$. For our applications with salinity and temperature observations,
this matrix is diagonal because we assume conditionally independent
sensor readings, but it might not be diagonal with other types of
combined measurements.  The matrix in parenthesis will be assumed to
be non-singular throughout the presentation. The associated
co-Kriging 
residual (cross-)covariance function can also be expressed in the same
vein via
\begin{equation}\label{eq:cokrig_cov}
k_{[n]}(\bm{x},\bm{x}')=k(\bm{x},\bm{x}')-\lambda_{[n]}(\bm{x})^T 
\left(k(\bm{x}_{[n]}, \bm{x}_{[n]})+\Delta_{[n]} \right)
\lambda_{{[n]}}(\bm{x}').
\end{equation}

Let us now consider the case where a co-Kriging prediction of $Z$ was
made with respect to $n$ batches of generalized locations, concatenated again within $\bm{x}_{[n]}=(\bm{x}_1,\dots, \bm{x}_n)$,
and one wishes to update the prediction by incorporating a new vector
of observations $\mathbf{z}_{n+1}$ measured at a batch of
$q_{n+1} \geq 1$ generalized locations $\bm{x}_{n+1}$.
Thanks to our representation of co-Kriging in terms of simple Kriging
with respect to generalized locations, a strightforward adaptation of
the batch-sequential Kriging update formulae from
\citep{Chevalier.etal2013a} suggests that
\begin{equation}\label{eq:meanCoK}
\mu_{[n+1]}(\bm{x})=\mu_{[n]}(\bm{x})+\lambda_{[n+1,n+1]}(\bm{x})^T (\mathbf{z}_{n+1}-\mu(\bm{x}_{n+1})),
\end{equation}
where $\lambda_{[n+1,n+1]}(\bm{x})$ denotes the $q_{n+1}$-dimensional
sub-vector extracted from $\lambda_{[n+1]}(\bm{x})$ that corresponds
to the Kriging weights for the last $q_{n+1}$ responses
when predicting at $\bm{x}$ relying on all measurements until batch $(n+1)$.
The associated co-Kriging residual (cross-)covariance function is
\begin{align}\label{eq:varCoK}
k_{[n+1]}(\bm{x},\bm{x}') & = k_{[n]}(\bm{x},\bm{x}')\\
 \nonumber - & \lambda_{[n+1,n+1]}(\bm{x})^T 
\left(k_{[n]}(\bm{x}_{n+1}, \bm{x}_{n+1})+\Delta_{n+1}\right)
\lambda_{{[n+1,n+1]}}(\bm{x}'),
\end{align}
As noted in \citep{Chevalier2015} in the case of scalar-valued fields,
these update formulae naturally extend to universal Kriging in
second-order settings and apply without Gaussian assumptions. We will
now see how the latter formulae are instrumental in deriving
semi-analytical formulae for step-wise uncertainty reduction criteria
for vector-valued random fields.

\subsection{Uncertainty Quantification on ESs of multivariate GRFs}
\label{sec:set_uq}

We now introduce quantities that allow UQ on the volume of the ES
$\es$. Let $\mes$ be a (locally finite, Borel) measure on
$\domain$. We want to investigate the probability distribution of
$\mes(\es)$ through its moments.  Centered moments may be computed
using Proposition~\ref{propo1} developed in the appendix.  In
particular, as an integral over EPs, the 
$\EMV = \operatorname{Var}[\mes(\es)]$ is:
\begin{equation*}
\begin{split}
\EMV
&=\int_{\domain^2} \mathbb{P}\left(
\gp[u]\in T, \gp[v]\in T \right)
d\mes^{\otimes}(u, v)\\
&-\left( \int_{\domain} \mathbb{P}\left(\gp[u]\in T\right) d\mes(u) \right)^2,
\end{split}
\end{equation*}
which in the excursion/sojourn case where $\T=(-\infty, t_1] \times
\dots \times (-\infty, t_{\no}]$ is
\begin{equation*}
\begin{split}
\EMV
&=\int_{\domain^2}
\varPhi_{2\no}
\left(
(\bt, \bt); \mu((u,v)),
K((u,v),(u,v))
\right)
\
\mathrm{d}\mes^{\otimes} 
(u,v)\\
&-\left( \int_{\domain} \varPhi_{\no}\left(\bt;\mu(u), K(u)\right) d\mes(u) \right)^2,
\end{split}
\end{equation*}
where $\varPhi_{\no}$ denotes the $\no$-variate Gaussian cumulative
distribution function (CDF) numerically \citep{genz2009computation}.

Note that this quantity requires the solution of an integral over
$\domain^2$. In contrast, the IBV of
\cite{bect2019} involves solely an integral on $\domain$ and can be
expanded as
\begin{equation*}
\begin{split}
\operatorname{IBV} 
&=\int_{\domain}
\mathbb{P}\left(\gp[\uu]\in T\right)(1-\mathbb{P}\left(\gp[\uu]\in T\right))
d\mes(u) \\
&=\int_{\domain}
\varPhi_{\no}\left(\bt;\mu(\uu), K(\uu)\right)
-\left(\varPhi_{\no}\left(\bt;\mu(\uu), K(\uu)\right) \right)^2
\mathrm{d}\mes(u).
\end{split}
\end{equation*}

\subsection{Expected IBV and EMV}
\label{sec:eibv}

We compute the expected effect of the inclusion of new observations on
the $\EMV$ and $\IBV$ of the ES $\es$. Let us consider the
same setting as in Eq. \eqref{eq:meanCoK} and \eqref{eq:varCoK}, and
let $\currentExp{.}$ and $\currentProba{.}$ denote conditional
expectation and probability conditional on the first $n$ batches of
observations, respectively. We use $\IBV_{\stage}$ to denote $\IBV$ with respect to the conditional law $\mathbb{P}_{\stage}$.

\medskip

In order to study the effect of the inclusion of a new data point, we
let $ \currentIBV(\bm{x}; \bm{y}) $ denote the expected IBV under the
current law of the field, conditioned on observing $\bm{y}$ at $\bm{x}$
(generalized, possibly batch observation). The expected effect of a
new observation on the IBV is then 
\begin{equation}\label{def:eibv}
    \currentEIBV(\bm{x}):=\currentExp{\mbox{IBV}(\bm{x}; \bm{Y})},
\end{equation}
where $\bm{Y}$ is distributed according to the current law of
$Z_{\bm{x}}$ and with independent noise having covariance matrix
$\Delta_n$.

We next present a result that allows efficient computation of $\EIBV$
as an integral of CDFs of the multivariate Gaussian distribution. This will prove
useful when designing sequential expected uncertainty reduction strategies.

\begin{propo}
\label{propo_eibv}
\begin{equation}
\begin{split}
\currentEIBV(\bm{x})
&=\int_{\domain} \varPhi_{\no}\left(\bt;~\currentMean{\uu}, \currentCov{u, u}\right) d\mes(u)\\
&-\int_{\domain} \varPhi_{2\no}
\left(
\left(
\begin{matrix}
\bt-\currentMean{u}\\
\bt-\currentMean{u}
\end{matrix}
\right);
\mathbf{\Sigma}_{[n]}(\uu)
\right)
d\mes(u),
\end{split}
\end{equation}
where the matrix $\mathbf{\Sigma}_{[n]}(\uu)$ is defined as
\begin{equation*}
\begin{split}
\mathbf{\Sigma}_{[n]}(\uu)&=
\left(
\begin{matrix}
\currentCov{u, u} & \currentCov{u, u}-\futureCov{u, u}\\
\currentCov{u, u}-\futureCov{u, u} & \currentCov{u, u}
\end{matrix}
\right).\\
\end{split}
\end{equation*}
\end{propo}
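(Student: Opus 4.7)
The plan is to compute the expected IBV by swapping integration and expectation, then handling the two terms in the Bernoulli variance separately using the tower property and a two-independent-copies trick.

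First I would start from the definition
\begin{equation*}
\currentEIBV(\bm{x})=\currentExp{\int_{\domain}\futureProba{\gp[u]\in T}\bigl(1-\futureProba{\gp[u]\in T}\bigr)\,d\mes(u)},
\end{equation*}
apply Fubini--Tonelli to interchange the integral over $\domain$ with the conditional expectation, and split the integrand into a linear term and a quadratic term in $\futureProba{\gp[u]\in T}$. The linear term is handled immediately by the tower property: $\currentExp{\futureProba{\gp[u]\in T}}=\currentProba{\gp[u]\in T}$, which by Gaussianity equals $\varPhi_\no(\bt;\currentMean{u},\currentCov{u,u})$, giving the first summand of the claim.

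The main work goes into the quadratic term $\currentExp{\futureProba{\gp[u]\in T}^2}$. My plan is to introduce two random vectors $Z^{(1)}_u,Z^{(2)}_u$ that are conditionally independent copies of $Z_u$ given the filtration $\mathcal{F}_{n+1}$ generated by the first $n+1$ batches, i.e.\ each is conditionally $\mathcal{N}(\futureMean{u},\futureCov{u,u})$ given $\mathcal{F}_{n+1}$. Then
\begin{equation*}
\futureProba{\gp[u]\in T}^2=\mathbb{E}_{[n+1]}\!\left[\mathbbm{1}_{Z^{(1)}_u\in T}\mathbbm{1}_{Z^{(2)}_u\in T}\right],
\end{equation*}
and another application of the tower property yields $\currentExp{\futureProba{\gp[u]\in T}^2}=\currentProba{Z^{(1)}_u\in T,\,Z^{(2)}_u\in T}$.

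Next I would identify the joint distribution of $(Z^{(1)}_u,Z^{(2)}_u)$ under $\mathbb{P}_{[n]}$ as Gaussian. Gaussianity follows because each $Z^{(i)}_u$ can be decomposed as $\futureMean{u}+\eta^{(i)}_u$ with $\eta^{(i)}_u\sim\mathcal{N}(0,\futureCov{u,u})$ independent of $\mathcal{F}_{n+1}$ and mutually independent across $i$, while $\futureMean{u}$ is an affine Gaussian function of the future observation vector $\bm{Z}_{n+1}$ via the update formula \eqref{eq:meanCoK}. Both marginals are $\mathcal{N}(\currentMean{u},\currentCov{u,u})$ by the tower property and the law of total variance. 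The crucial step is the cross-covariance: by the law of total covariance and conditional independence,
\begin{equation*}
\operatorname{Cov}_{[n]}\!\bigl(Z^{(1)}_u,Z^{(2)}_u\bigr)=\operatorname{Cov}_{[n]}\!\bigl(\futureMean{u},\futureMean{u}\bigr)=\operatorname{Var}_{[n]}\!\bigl(\futureMean{u}\bigr),
\end{equation*}
and by total variance applied to $Z_u$, together with the fact that in the Gaussian setting $\futureCov{u,u}$ is deterministic (it depends only on the design $\bm{x}_{n+1}$, not on $\bm{Z}_{n+1}$), one obtains $\operatorname{Var}_{[n]}(\futureMean{u})=\currentCov{u,u}-\futureCov{u,u}$. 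This yields the covariance matrix $\mathbf{\Sigma}_{[n]}(u)$ and, after centering, the orthant probability expressed as $\varPhi_{2\no}$ in the statement.

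The main obstacle is the cross-covariance computation: one must carefully justify that $\futureCov{u,u}$ is deterministic given $\mathcal{F}_n$ so that the law of total variance reduces to the clean difference $\currentCov{u,u}-\futureCov{u,u}$, and check that conditional independence of the two synthetic copies is preserved after marginalising over the new observations. Once these Gaussian-conditioning subtleties are handled, assembling the two integrals gives the claimed identity.
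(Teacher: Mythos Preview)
Your proposal is correct and lands on the same target, but it is organised differently from the paper. The paper first rewrites $\futureProba{\gp[u]\in T}=\varPhi_{\no}(a+BV;\futureCov{u,u})$ using the co-Kriging mean update \eqref{eq:meanCoK}, with $a=\bt-\currentMean{u}$, $B=-\lambda_{[n+1,n+1]}(u)^T$ and $V$ the centred future observation, and then invokes a separately stated lemma (Proposition~\ref{propo2}) computing $\mathbb{E}[\varPhi_p(a+BV;C)^h]$ for general $h$; that lemma is itself proved via the independent-copies trick you use. In other words, the paper factors the argument through a reusable identity and recovers the off-diagonal block as $B\Sigma_V B'$, which equals $\currentCov{u,u}-\futureCov{u,u}$ by the variance update \eqref{eq:varCoK}. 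You instead run the two-copies construction directly at the level of $Z_u$ and obtain the same off-diagonal block via the law of total variance, bypassing both the explicit weight matrix $B$ and the auxiliary lemma. Your route is a bit more self-contained and makes the role of the deterministic posterior covariance transparent; the paper's route is more modular and immediately generalises (e.g.\ to the EEMV computation, where products of different CDFs appear).
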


As for the expected EMV, a similar result may be derived.
\begin{propo}
\label{propo_emv}

\begin{equation*}
\begin{split}
\currentEEMV(\bm{x})
&=\int_{\domain^2} 
\varPhi_{2\no}
\left(
(\bt, \bt); ~ \mu((u,v)), 
K((u,v),(u,v))
\right) 
\
\mathrm{d}\mes^{\otimes} 
(u,v)\\
&-\int_{\domain^2} \varPhi_{2\no}
\left(
\left(
\begin{matrix}
\bt-\currentMean{\uu}\\
\bt-\currentMean{\vv}
\end{matrix}
\right);~
\mathbf{\tilde{\Sigma}}_{[\stage]}(\uu, \vv)
\right)
\mathrm{d}\mes^{\otimes} 
(u,v)
\end{split}
\end{equation*}
where the matrix $\mathbf{\tilde{\Sigma}}_{[n]}(\uu, \vv)$ is defined blockwise as
\begin{equation*}
\begin{split}
\mathbf{\tilde{\Sigma}}_{[n]}(\uu, \vv)&=
\left(
\begin{matrix}
\tilde{\Sigma}_{1,1}(\uu, \uu) & \tilde{\Sigma}_{1,2}(\uu, \vv)\\
\tilde{\Sigma}_{2,1}(\vv, \uu) & \tilde{\Sigma}_{2,2}(\vv, \vv)
\end{matrix}\right)
\end{split}
\end{equation*}
with blocks given, for $i,j\in \{1,2\}$ and $u,v\in \domain$, by
\begin{equation*}
\begin{split}
\tilde{\Sigma}_{i,j}(u, v) &= \lambda_{[n+1,n+1]}(\uu)^T k_{[n]}(\bm{x},\bm{x}) \lambda_{[n+1,n+1]}(\vv) + \delta_{i,j}\futureCov{\uu, \vv}.
\end{split}
\end{equation*}
\end{propo}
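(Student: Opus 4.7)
The plan parallels the derivation of Proposition~\ref{propo_eibv} but tracks two spatial variables $u,v$ rather than one. Under any law $\mathbb{Q}$ on the field, Proposition~1 of the appendix combined with Fubini yields the ``covariance'' representation
\begin{equation*}
\operatorname{Var}_{\mathbb{Q}}[\mes(\es)] = \int_{\domain^2}\!\bigl[\mathbb{Q}\bigl(\gp[u]\in T,\gp[v]\in T\bigr) - \mathbb{Q}\bigl(\gp[u]\in T\bigr)\mathbb{Q}\bigl(\gp[v]\in T\bigr)\bigr]\,\mathrm{d}\mes^{\otimes}(u,v).
\end{equation*}
Applied at stage $\stage{+}1$ and combined with $\currentExp{\cdot}$, Fubini splits $\currentEEMV(\bm{x})$ into two $\domain^2$-integrals to be treated separately.

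For the first integral, the tower property gives $\currentExp{\futureProba{\gp[u]\in T,\gp[v]\in T}} = \currentProba{\gp[u]\in T,\gp[v]\in T}$, which under stage-$\stage$ Gaussianity equals the $2\no$-variate Gaussian CDF at $(\bt,\bt)$ with mean $\currentMean{(u,v)}$ and covariance $\currentCov{(u,v),(u,v)}$. This yields the first integrand of the statement (the display notation $\mu,K$ being understood as the stage-$\stage$ posterior quantities).

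The harder second integral requires decoupling the product of two conditional single-location excursion probabilities. Introduce auxiliary $\no$-vectors
\begin{equation*}
\widetilde{Z}_u = \futureMean{u} + \xi_u,\qquad \widetilde{Z}_v = \futureMean{v} + \xi_v,
\end{equation*}
with $\xi_u\sim\mathcal{N}(\bzero,\futureCov{u,u})$ and $\xi_v\sim\mathcal{N}(\bzero,\futureCov{v,v})$ drawn independently of each other and of the batch-$(\stage{+}1)$ observations. Then $(\widetilde{Z}_u,\widetilde{Z}_v)$ is conditionally independent given those observations with the correct marginals under $\mathbb{P}_{[\stage+1]}$, so that $\futureProba{\gp[u]\in T}\futureProba{\gp[v]\in T} = \futureProba{\widetilde{Z}_u\in T,\widetilde{Z}_v\in T}$; a second application of the tower then yields $\currentExp{\futureProba{\gp[u]\in T}\futureProba{\gp[v]\in T}} = \currentProba{\widetilde{Z}_u\in T,\widetilde{Z}_v\in T}$.

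It then remains to identify the joint law of $(\widetilde{Z}_u,\widetilde{Z}_v)$ under $\mathbb{P}_{[\stage]}$. Joint Gaussianity is clear since $\futureMean{\cdot}$ is an affine functional of the Gaussian batch observations via Eq.~\eqref{eq:meanCoK}, and the $\xi$'s are independent Gaussians; the means are $(\currentMean{u},\currentMean{v})$ by the martingale property $\currentExp{\futureMean{\cdot}} = \currentMean{\cdot}$. The covariance blocks follow from the update Eq.~\eqref{eq:varCoK}: the variation of the posterior mean under $\mathbb{P}_{[\stage]}$ contributes $\lambda_{[\stage+1,\stage+1]}(u)^T\bigl(\currentCov{\bm{x},\bm{x}}+\Delta_{\stage+1}\bigr)\lambda_{[\stage+1,\stage+1]}(v) = \currentCov{u,v}-\futureCov{u,v}$ in every block, while the independent $\xi$-noise adds $\futureCov{u,u}$ (resp.\ $\futureCov{v,v}$) only on the diagonal blocks -- hence the Kronecker $\delta_{i,j}$. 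Recognising this block structure as $\mathbf{\tilde{\Sigma}}_{[\stage]}(u,v)$ closes the proof. The main obstacle is precisely the auxiliary-variable construction: decoupling the product of CDFs into conditionally independent copies, then correctly separating the two covariance sources (posterior-mean fluctuation under $\mathbb{P}_{[\stage]}$ versus independent predictive noise).
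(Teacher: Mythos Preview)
Your proof is correct and follows essentially the same route as the paper: after the same tower-property reduction of the first integral, the paper dispatches the product term by invoking its general Proposition~\ref{propo3} (with $g=2$, $h_1=h_2=1$), whereas you inline that lemma's own proof via the auxiliary independent vectors $\xi_u,\xi_v$. Your explicit inclusion of the noise $\Delta_{\stage+1}$ in the posterior-mean covariance (and the identification with $\currentCov{u,v}-\futureCov{u,v}$) is in fact slightly more careful than the paper's displayed formula.
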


We remark that Propositions \ref{propo_eibv} and \ref{propo_emv} are
twofold generalizations of results from \cite{chevalier2014fast}: they
extend previous results to the multivariate setting and also allow for
the inclusion of batch or heterotopic observations through the concept
of generalized locations.  A key element for understanding these
propositions is that the conditional co-Kriging mean entering in the
EPs depend linearly on (batch) observations. The conditional equality
expressions thus become linear combinations of Gaussian variables
whose mean and covariance are easily calculated.  Related closed-form
solutions have been noted in similar contexts
\citep{bhattacharjya2013value,stroh}, but not generalized to our
situation with random sets for vector-valued GRFs.

\subsection{Expected Bernoulli variance for a two dimensional Example}
\label{Sec:UnivarEx}

We illustrate the expected Bernoulli variance (EBV) associated with
different designs on a bivariate example. This mimics our river plume
application and hence the first and second component of the random
field will be called \textit{temperature} and \textit{salinity}. We
begin with a \textit{pointwise} example, considering a single
bivariate Gaussian distribution (i.e. no spatial elements).

\subsubsection{A pointwise study}

Say we want to study the excursion probability of a bivariate Gaussian
above some threshold, where the thresholds are set equal to the mean;
$\mu_1=t_1=5^o C$ for temperature and $\mu_2=t_2=30$ g/kg for
salinity, and we play with the temperature and salinity correlation
and variances to study the effect on the EP and EBV.

\begin{figure}[!b] 
\centering
\includegraphics[width=0.99\textwidth]{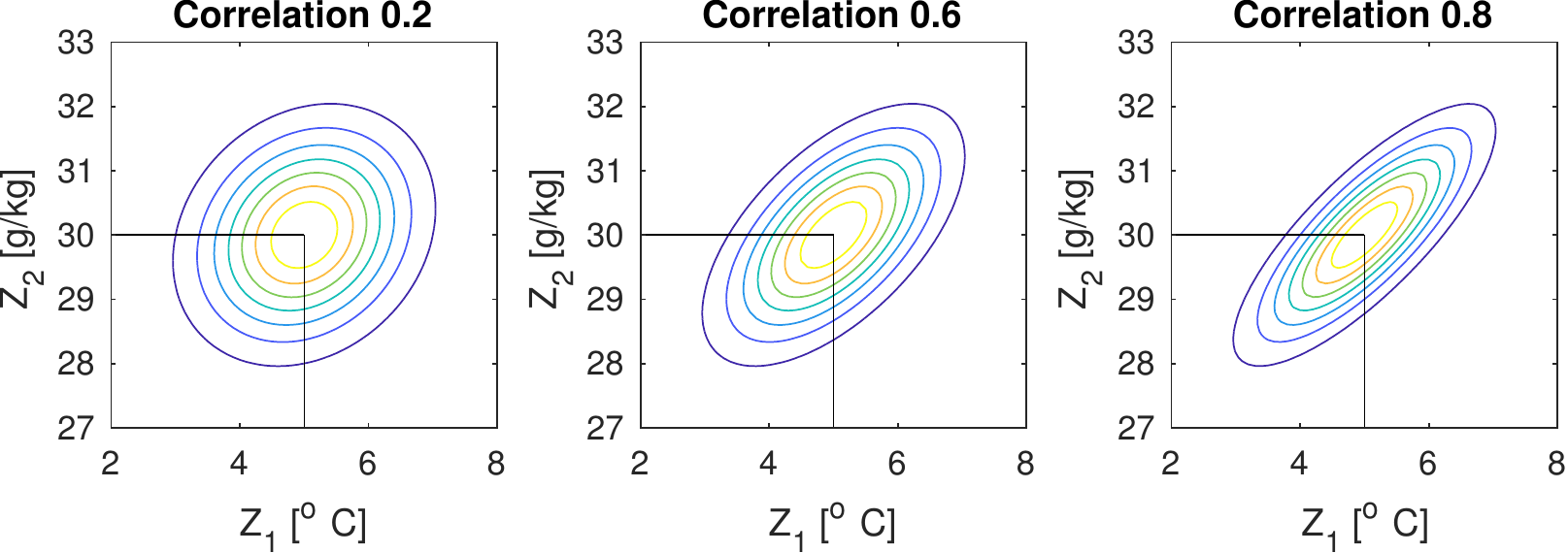}
\caption{Density contour plots with increasing correlations between
  temperature and salinity. The densities have unit variance and
  thresholds identical to the mean values $5^o C$ and
  $30$ g/kg.}
\label{illus_bivarDens}
\end{figure}

Fig.~\ref{illus_bivarDens} shows contour plots of three different
densities with increasing correlation $\gamma$ between temperature and
salinity. The displayed densities have unit standard deviations for
both temperature and salinity, but we also study the effect of
doubling the standard deviations.

Table \ref{tab:sim_rhoab} shows the initial EPs and the associated
Bernoulli variance (second row) for the examples indicated in
Fig.~\ref{illus_bivarDens}. The EPs increase with the correlation as
there is a strong tendency to have jointly low or high temperature and
salinity. The Bernoulli variance is similarly larger for high
correlations. EPs and Bernoulli variances are the same for temperature
and salinity standard deviations $\sigma_1$ and $\sigma_2$, which
implies that high variability in temperature and salinity is not
captured in the $p(1-p)$ expression.

\begin{table}[!t] \centering \caption{EP and Bernoulli variance for
    different correlations and variances (top rows), and EBVs for both
    temperature and salinity data, and only temperature data (bottom
    rows).}
  \begin{tabular}{c|ccc|ccc}
 &\multicolumn{3}{c}{$\sigma_1=\sigma_2=1$} & \multicolumn{3}{c}{$\sigma_1=\sigma_2=2$} \\
\hline
Correlation $\gamma$ & 0.2 & 0.6 & 0.8 & 0.2 & 0.6 & 0.8 \\
\hline
$p$ & 0.28 & 0.35 & 0.40 & 0.28 & 0.35 & 0.40 \\ 
$p(1-p)$ & 0.20 & 0.23 & 0.24 & 0.20 & 0.23 & 0.24 \\ 
EBV, Temperature and Salinity & 0.092 & 0.089 & 0.085 & 0.052 & 0.051 & 0.049 \\ 
EBV, Temperature only & 0.151 & 0.138 & 0.123 & 0.137 & 0.114 & 0.093 \\ 
\hline
\end{tabular}
\label{tab:sim_rhoab}
\end{table}
The bottom two rows of Table \ref{tab:sim_rhoab} show EBV
results. This is presented for a design gathering both data types, and
for a design with temperature measurements alone. When both data are
gathered, the measurement model is
$(Y_1,Y_2)^t=(Z_1,Z_2)^t+\bepsilon$, with
$\bepsilon \sim N(0,0.5^2I_2)$, while $Y_1=Z_1+\epsilon$,
$\epsilon \sim N(0,0.5^2)$ when only temperature is measured.  For
this illustration, Table \ref{tab:sim_rhoab} shows that the expected
Bernoulli variance gets smaller with larger standard deviations. The
expected reduction of Bernoulli variance is further largest for the
cases with high correlation $\gamma$. Albeit smaller, there is also
uncertainty reduction when only temperature is measured (bottom row),
especially when temperature and salinity are highly correlated. When
correlation is low ($\gamma=0.2$), there is little information about
salinity in the temperature data, and therefore less uncertainty reduction.

\subsubsection{Including Spatiality}
\label{sec:including_spatiality}

We now turn to an example involving a full-fledged GRF. The statistical model we consider has a linear trend
\begin{align*}
\mu(s)=\mathbb{E}\left[\begin{pmatrix}
Z_{\spatloc, 1}\\ Z_{\spatloc, 2}
\end{pmatrix}\right] &= \beta_0 + \beta_1 \spatloc,
\end{align*}
with $\beta_0$ a two dimensional vector and $\beta_1$ a $2\times 2$ matrix. In our examples, we only consider separable covariance models;
\begin{align*}
\textrm{Cov}\left(Z_{\spatloc, i}, Z_{v, j}\right) &= k(\spatloc, v) \gamma(i, j),~ \gamma(i, j) = \begin{cases} \sigma_i^2,~ i=j\\
   \gamma \sigma_i \sigma_j,~i\neq j,
        \end{cases}
\end{align*}
where an isotropic Mat\'{e}rn 3/2 kernel $(1+\eta h)\exp (-\eta h)$ is
used, for Euclidean distance $h$.  In the accompanying Python examples
taking place within the MESLAS
toolbox~\footnote{\url{https://github.com/CedricTravelletti/MESLAS}},
these modeling assumptions can however be relaxed to anisotropic
covariance and changing variance levels across the spatial
domain. Both extensions are relevant for the setting with river
plumes, but in practice this requires more parameters to be
specified. With extensive satellite data or prior knowledge from
high-resolution ocean models, one could also possibly fit more complex
multivariate spatial covariance functions
\citep{gneiting2010matern,genton2015cross}, but that is outside the
scope of the current work.

In the rest of this section, we consider a GRF with mean and
covariance structure as above and parameters
\begin{align*}
\beta_0 = \begin{pmatrix}
5.8\\ 24.0
\end{pmatrix}, ~ \beta_1 = \begin{pmatrix}
0.0 & -4.0\\
0.0 & -3.8
\end{pmatrix},~ \sigma_1 = 2.5,~ \sigma_2 = 2.25, ~ \gamma = 0.2,
\end{align*}
and kernel parameter $\eta=3.5$.
One realization of this GRF is shown in Fig.~\ref{example_excu}.
In the computed examples, the spatial domain $\domain$ is
discretized to a set of $N$ grid locations
$\mathcal{M}_g = \{\x_i, i=1,\ldots,N \}$, where each cell has area
$\delta$; the same grid is used for the waypoint graph for possible
design locations. The EIBV is approximated by sums over all grid
cells.

We now study how the EBV [Eq.\eqref{def:eibv}] associated with data
collection at a point changes if only one of the two components of the
field is observed. We first draw a realization of the GRF defined
above and use it as ground-truth to mimic the real data-collection
process. A first set of observations are done at the locations
depicted in grey (see Fig.~\ref{fig:ebv_comp}), and the data is used
to update the GRF model. We then consider the green triangle as a
potential next observation location and plot the EBV reduction (at
each grid node in the waypoint graph) that would result from observing
only one component of the field (temperature or salinity), or both at
that point.

\begin{figure}[!b] 
\centering 
\subfigure[Regions of interest.]{\includegraphics[
height=0.21\textwidth,keepaspectratio]{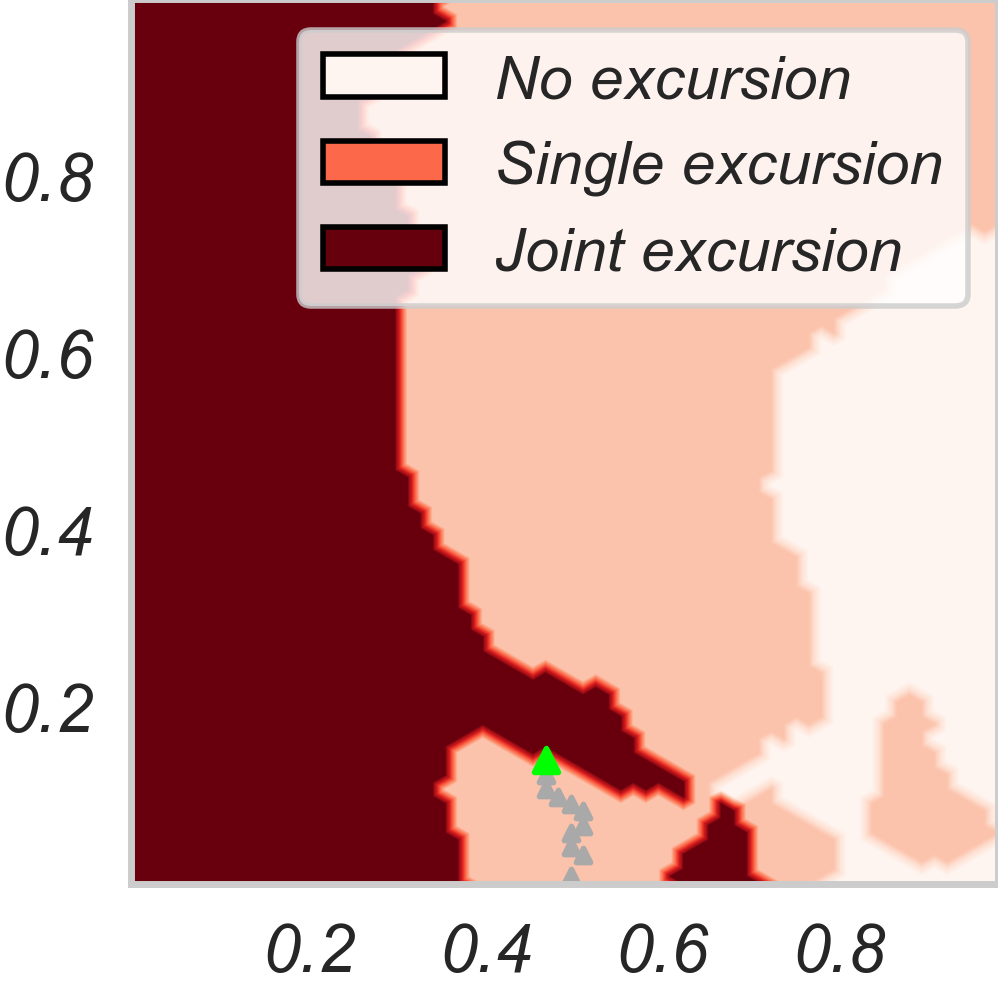}\label{fig:ebv_comp_excu}}
\subfigure[Temperature.]{\includegraphics[
height=0.22\textwidth,keepaspectratio]{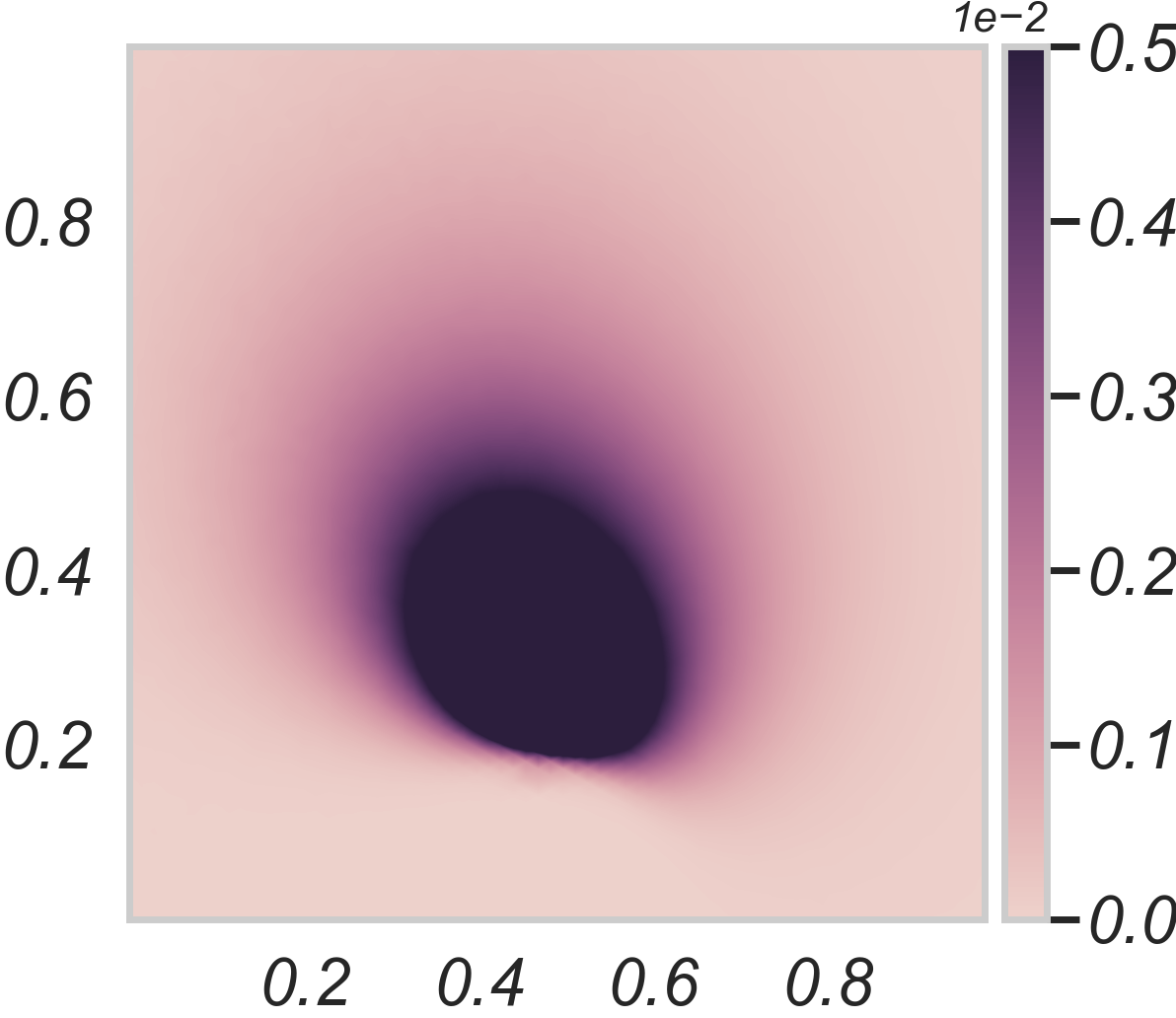}\label{fig:ebv_comp_temp}}
\subfigure[Salinity.]{\includegraphics[
height=0.22\textwidth,keepaspectratio]{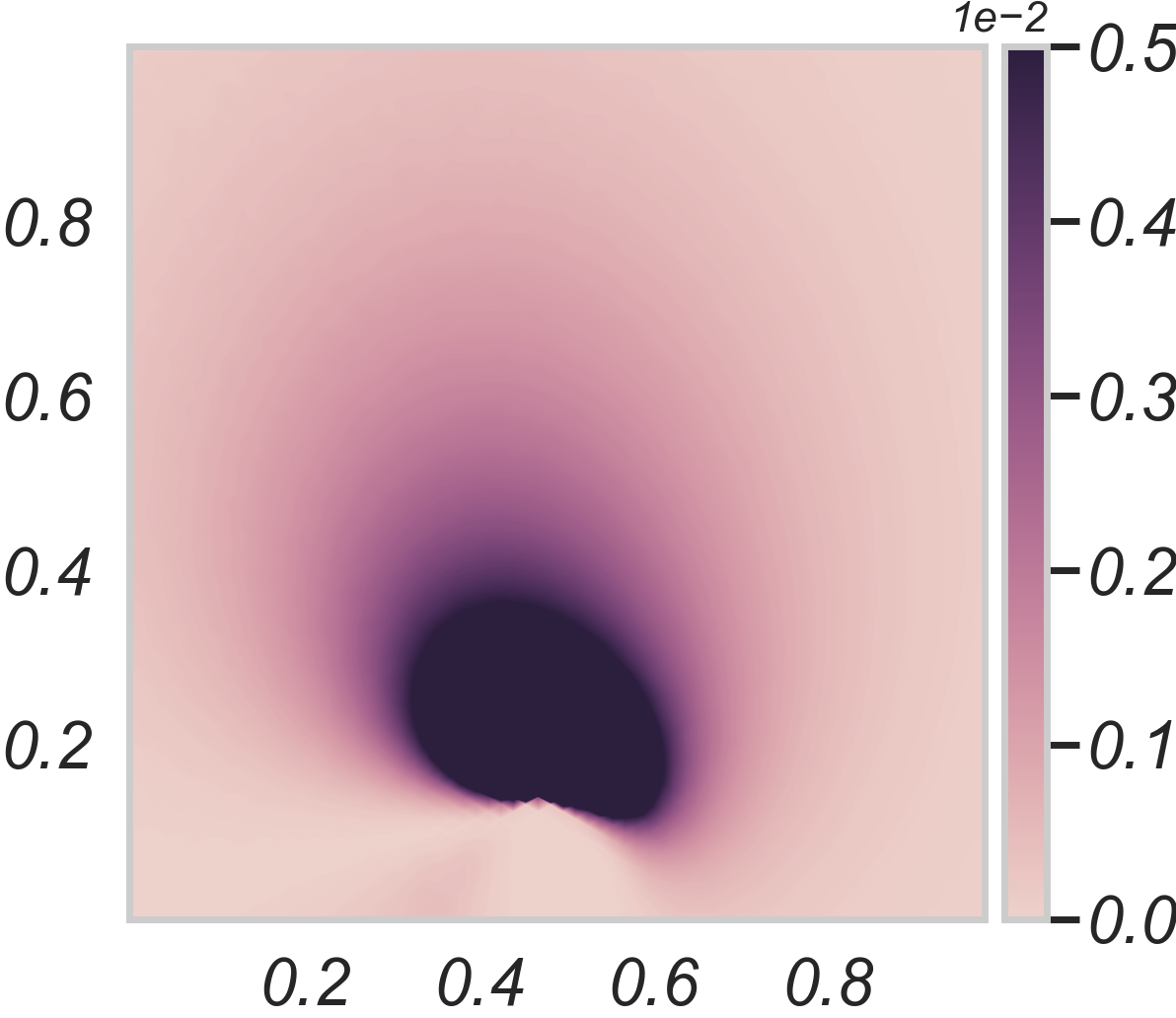}\label{fig:ebv_comp_sal}}
\subfigure[Both.]{\includegraphics[
height=0.22\textwidth,keepaspectratio]{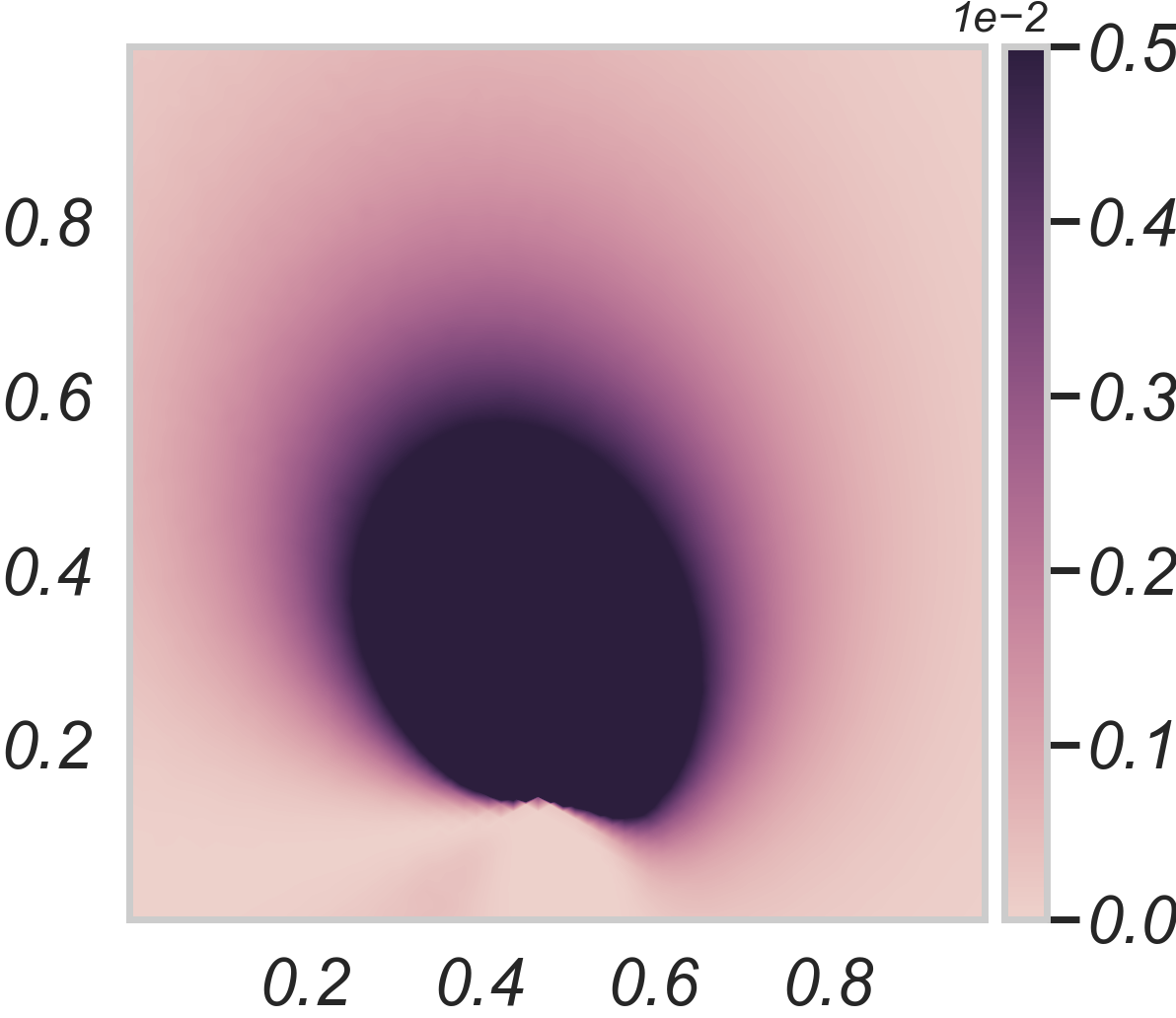}\label{fig:ebv_comp_both}}
\caption{Pointwise Bernoulli variance reduction for observation of a
  single or both components of the random field at one location. Data
  collection locations in green. True excursion set in red. Places
  where only one response is above threshold are depicted in pink. EBV
  reduction associated to observing one or both responses at the green
  location are shown in \ref{fig:ebv_comp_temp},
  \ref{fig:ebv_comp_sal} and \ref{fig:ebv_comp_both}.}
\label{fig:ebv_comp}
\end{figure}

Note that plotting the EBV reduction at each point might also be used
to compare different data collection plans. For example,
Fig. \ref{fig:ebv_north_vs_east} shows the EBV reduction associated
with a data collection plan along a vertical line (static north) and
one associated with a horizontal (static east). Both expectations are
computed according to the a-priori distribution of the GRF (i.e. no
observations have been included yet).

\begin{figure}[ht] 
\centering 
\subfigure[Excursion probability.]{\includegraphics[
height=0.25\textwidth,keepaspectratio]{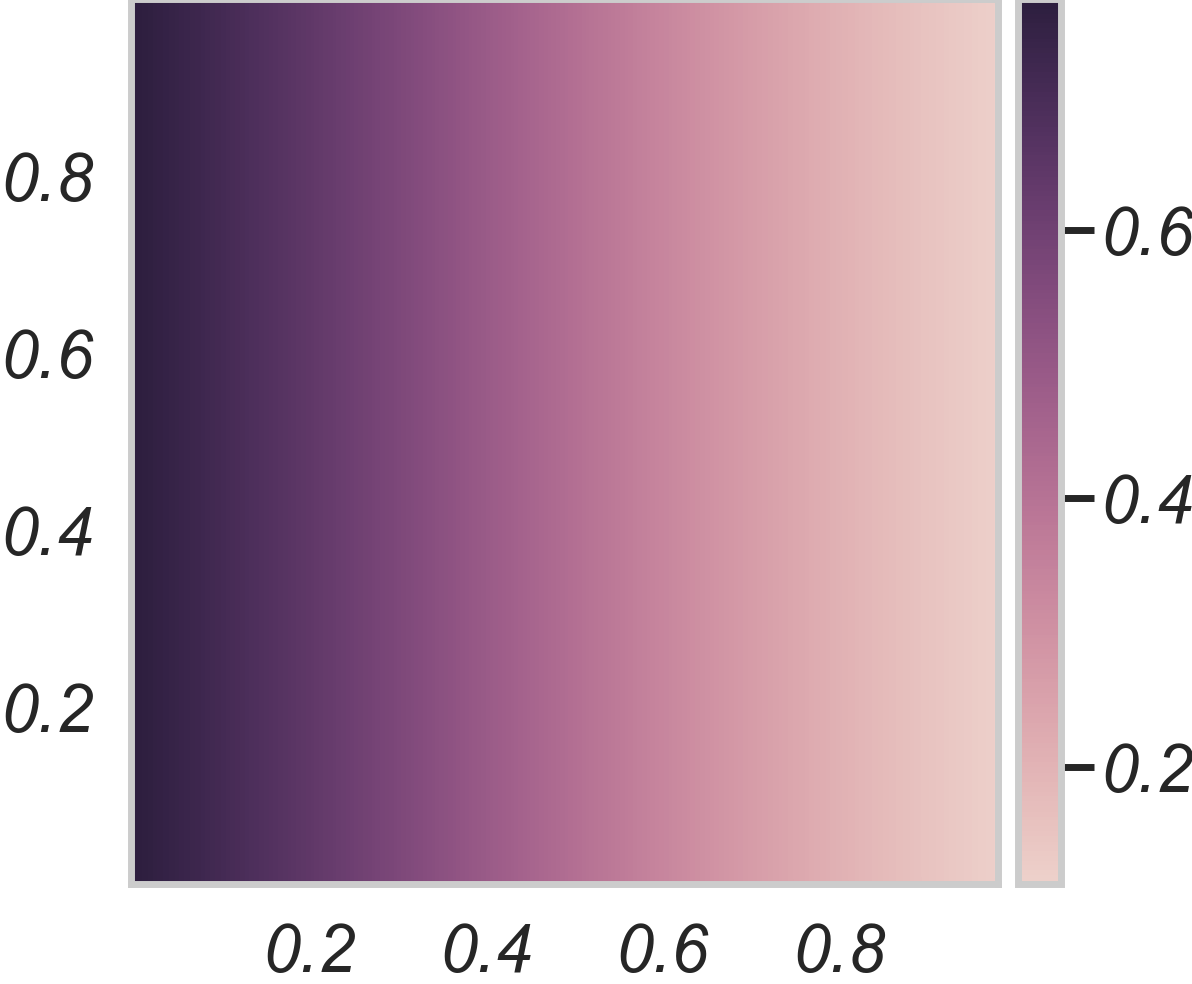}\label{fig:ebv_static_excu}}
\subfigure[Static north design.]{\includegraphics[
height=0.26\textwidth,keepaspectratio]{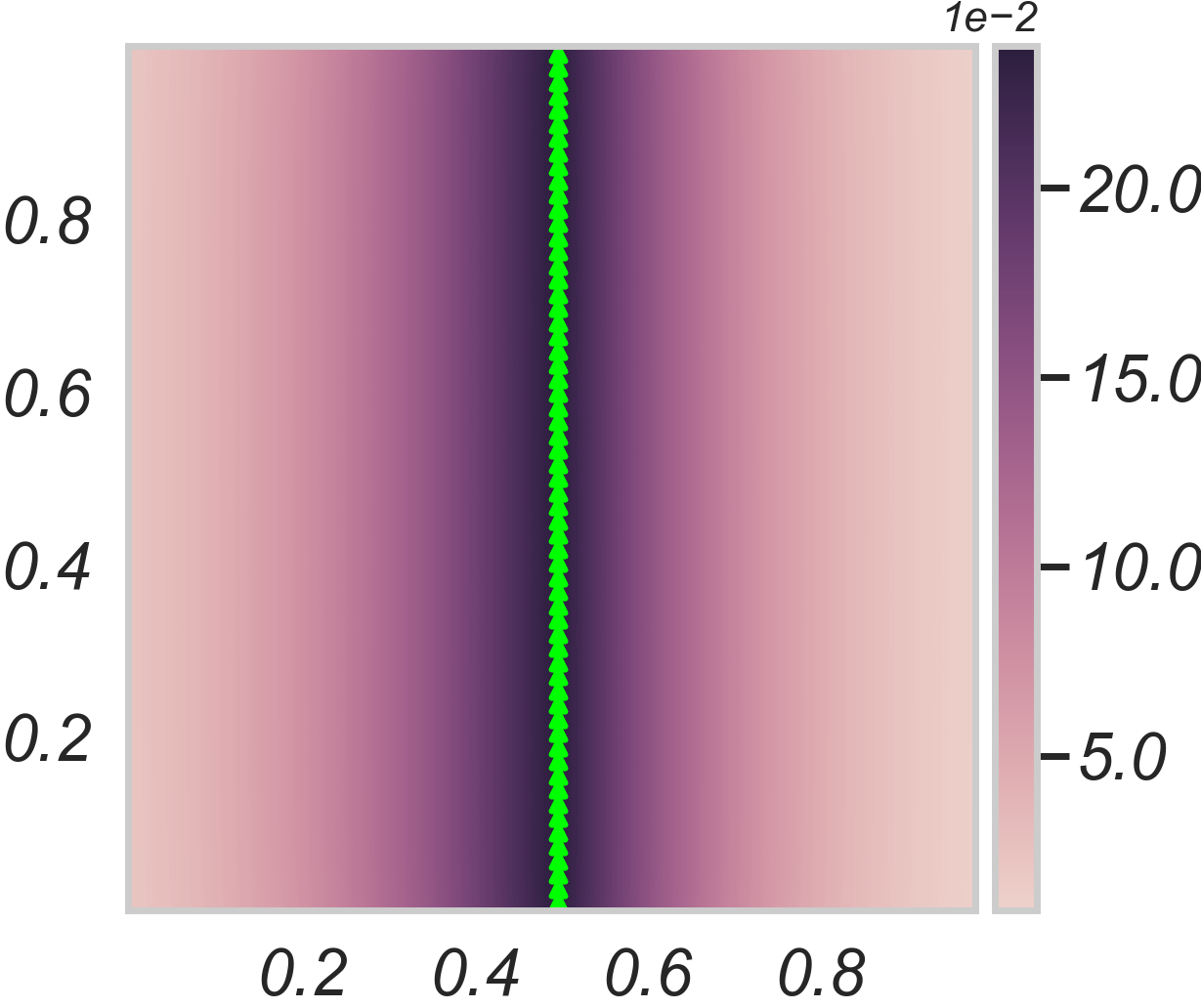}\label{fig:ebv_static_north}}
\subfigure[Static east design.]{\includegraphics[
height=0.26\textwidth,keepaspectratio]{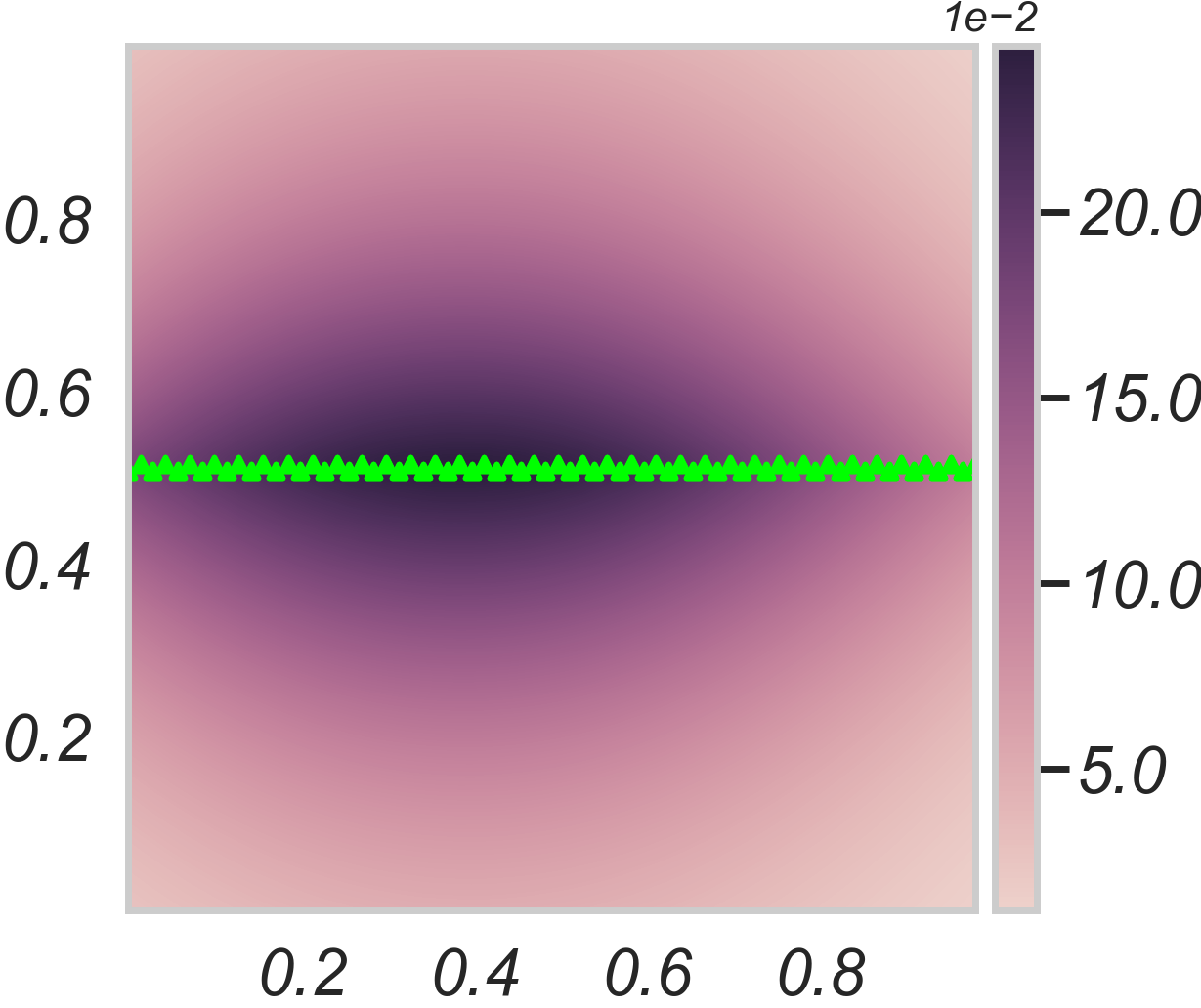}\label{fig:ebv_static_east}}
\caption{Pointwise Bernoulli variance reduction for two different static designs (later noted as \textit{static\_north} and \textit{static\_east}). The prior EP is shown in
  \ref{fig:ebv_static_excu}. EBV reduction for each design shown in
  \ref{fig:ebv_static_north} and \ref{fig:ebv_static_east}.}
\label{fig:ebv_north_vs_east}
\end{figure}

\section{Sequential designs and heuristic path planning}
\label{sec:heuristics}

We present sequential data collection
strategies that aim at reducing the expected uncertainty on the target
ES $\es$.

\subsection{Background}

From a sequential point of view, $n$ data collection steps have
already been performed and one wants to choose what data to collect
next. The design evaluations
are based on the conditional expectation $\currentExp{.}$ from the law
of the field, conditional on all data available at
stage $n$.
Once the best design at stage $n$
has been selected, the data are collected and the GRF
model is updated using co-Kriging
Eq. \ref{eq:meanCoK} and \ref{eq:varCoK}, yielding a conditional law $\mathbb{P}_{\stage + 1}$
after which the process is repeated.

Note that the type of data collected at each stage can be of various
type (all components of the field at a single location, only some
components at a subset of selected locations, etc.) because of the
concept of \textit{generalized location} in the co-Kriging expressions.
In general, a design strategy must choose the spatial location as well
as the components to observe (heterotopic), or where several
observations are allowed at each stage (batch).  For the case with an
AUV exploring the river plume, we limit our scope to choosing one of
the neighboring spatial location (waypoints) at each stage, and all
components (temperature and salinity) of the field are observed
(isotopic). The candidate points at this stage are denoted
$\candidates$ as defined from the 6 directions (apart from edges) in
the waypoint graph (see Fig.~\ref{fig:wp_graph_a}). The
set $\candidates$ depends on the current location, but for readability
we suppress this in the notation.

The mathematical expression for the optimal design in this sequential
setting involves a series of intermixed maximizations over designs and
integrals over data. In practice, the optimal solution is intractable
because of the enormous growth over stages (see
e.g. \cite{powell2016perspectives}).
Instead, we outline heuristic strategies.

\subsection{A Naive Sampling Strategy}
\label{naive}

A simple heuristic for adaptive sampling is to observe $Z$ at the
location in $\candidates$ with current EP closest to
$\frac{1}{2}$. While easy to implement, this strategy can lead to
spending many stages in boundary regions regardless of the possible
effect of sampling at the considered point for the future conditional
distribution of $Z$. The strategy does not account for the expected
reduction in uncertainty, and it does not consider having an
integrated effect over other locations.

\subsection{Myopic Path Planning}
\label{sec:myopic}

The myopic (greedy) strategy which we present here is optimal if we
imagine taking only one more stage of measurements; it does not
anticipate what the subsequent designs might offer beyond the first
stage.  Based on the currently available data the myopic strategy
selects the location that leads to the biggest reduction in EIBV:
\begin{criterion}[Myopic]
The next observation location $\spatloc_{\stage + 1}$ is chosen among
the minimizers in $\candidates$ of the criterion: 
\begin{equation}\label{critSEQ}
     C_{\text{myopic}}(u) = \EIBV_{\stage}\left(\spatloc\right)
\end{equation}
\end{criterion}

The EIBV is efficiently computed for each of the candidate points
$\candidates$ using Proposition \ref{propo2}. 
Even though this myopic strategy is non-anticipatory, it still
provides a reasonable approach for creating designs in many
applications. Moreover, it can be implemented without too much demand on
computational power, making it well-suited for embedding on an AUV.

\begin{figure}[ht] 
\centering 
\subfigure[Excursion realization.]{\includegraphics[
height=0.26\textwidth,keepaspectratio]{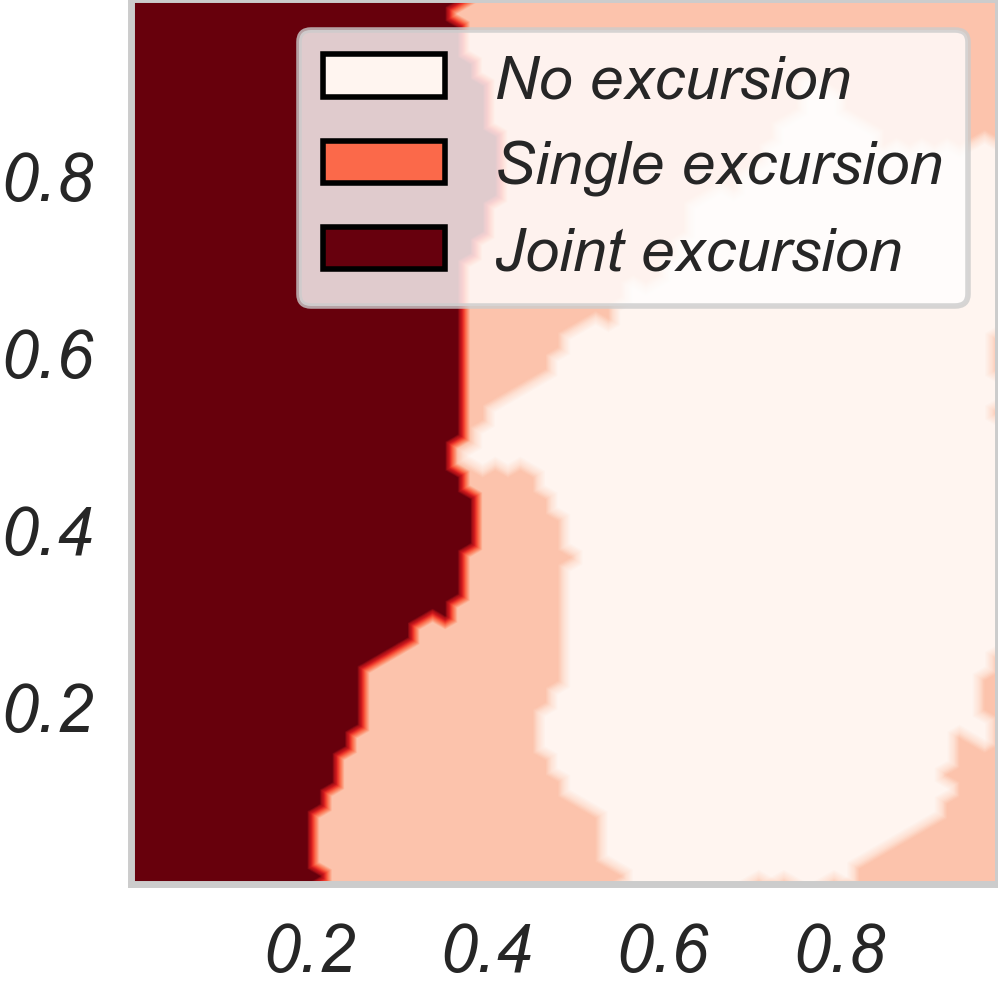}\label{fig:real_myopic}}
\hspace{2em}
\subfigure[BV reduction.]{\includegraphics[
height=0.26\textwidth,keepaspectratio]{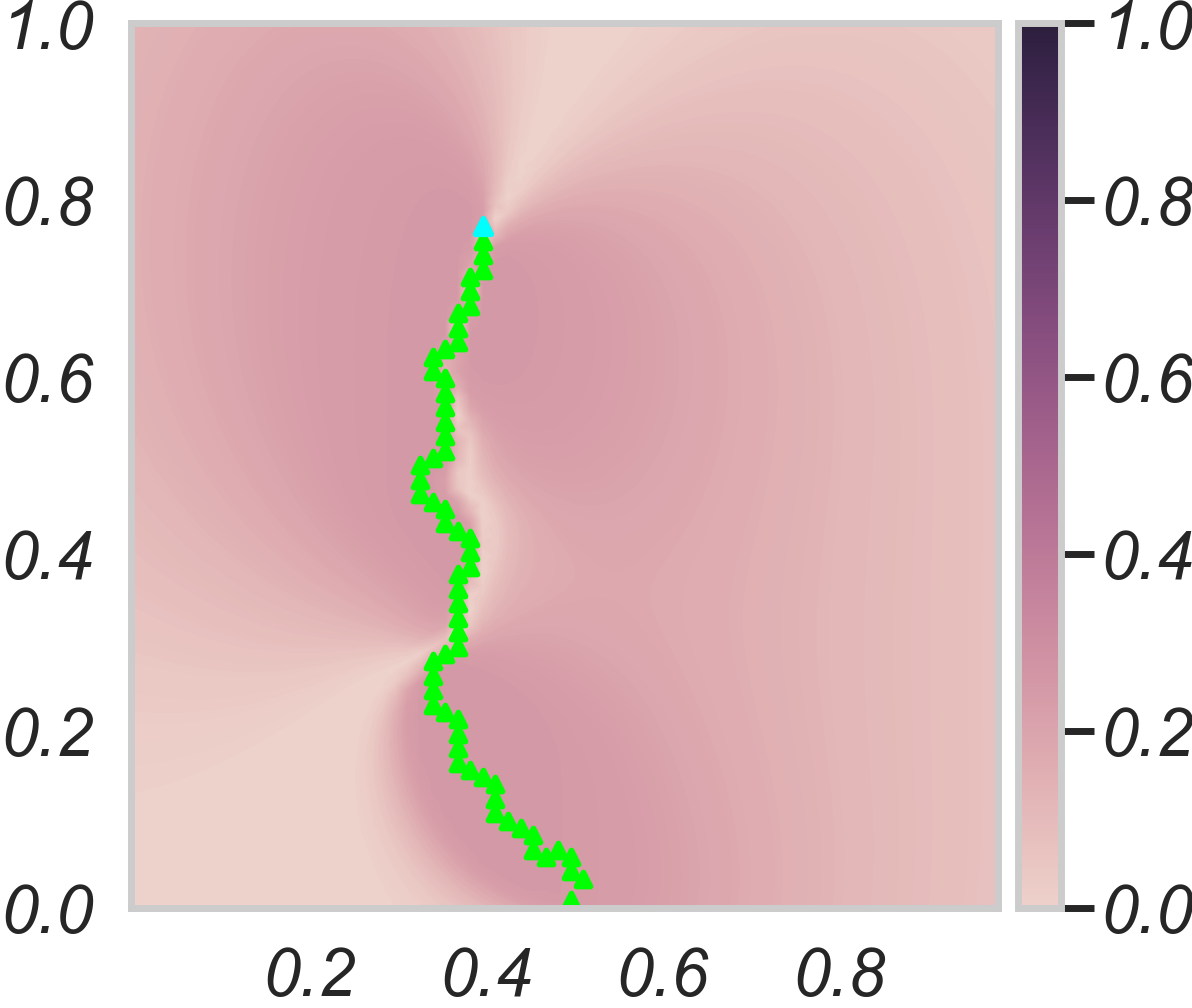}\label{fig:route_myopic}}
\hspace{2em}
\subfigure[Expected reduction of BV.]{\includegraphics[
height=0.26\textwidth,keepaspectratio]{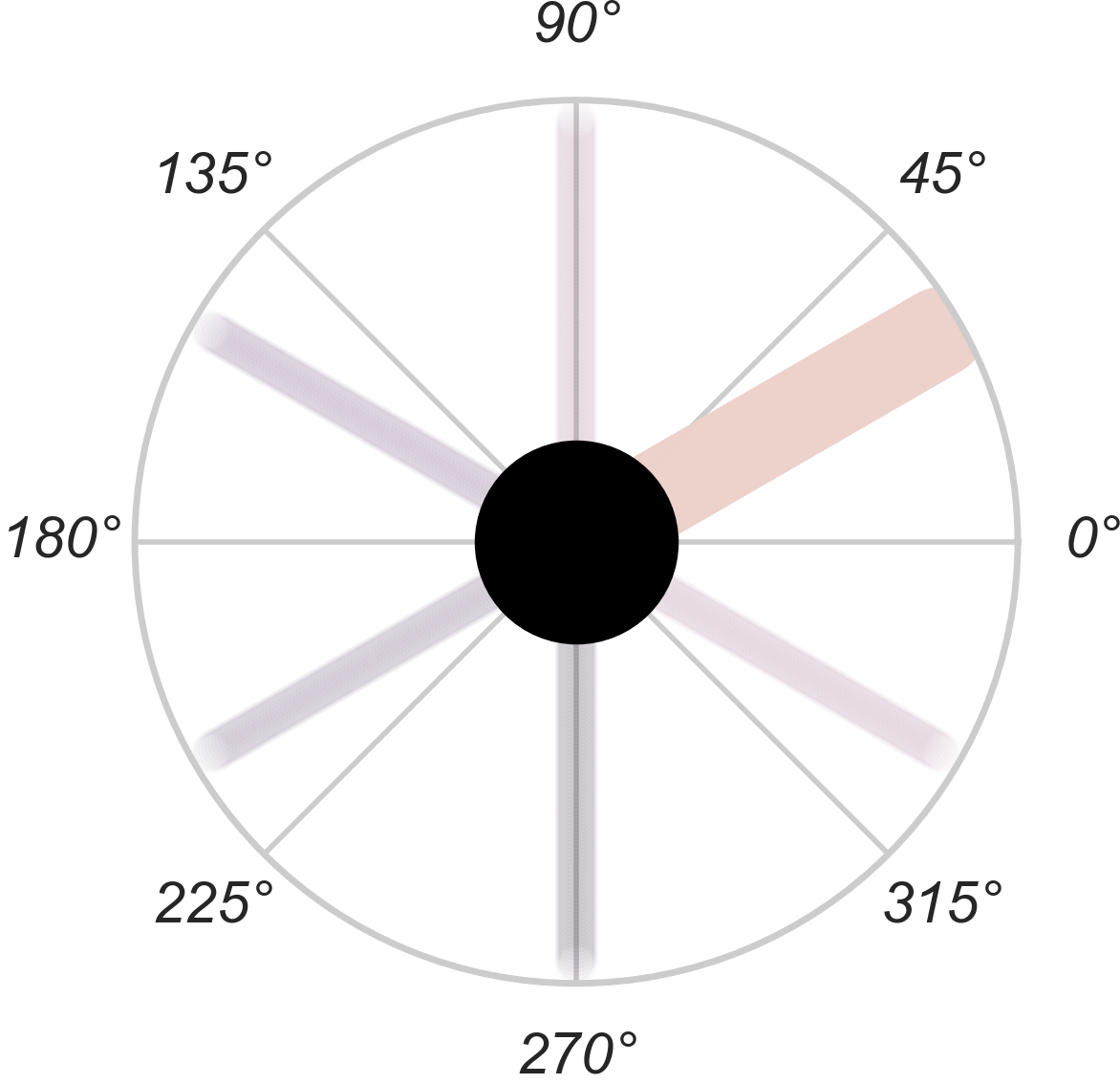}\label{fig:compass_myopic}}
\caption{Example run of the myopic strategy on a realisation of the GRF
model from \ref{sec:including_spatiality}. Reduction in Bernoulli variance
compared to the prior is shown in \ref{fig:route_myopic}, with past observation
locations in green and current AUV position in cyan. The expected IBV reduction
associated to data collection at neighbouring nodes of the current location is
shown in \ref{fig:compass_myopic}. The thick and light color indicates the node at $30 ^{\circ}$ to be the best possible choice.}
\label{fig:ebv_myopic}
\end{figure}

\subsection{Look-ahead Trajectory Planning}
\label{sec:LA}

We now extend the myopic strategy by considering two stages of
measurements, which is optimal in that it accounts consistently for
the expectations and minimizations in these two stages, but no anticipation beyond that.

The principle of
two-step look-ahead is to select the next observation location $\spatloc_{\stage + 1}$ that yields the biggest reduction in EIBV if
we were to (optimally) add one more observation after that again. In order to formalize this concept, we must
extend the notation for EIBV in the future (after observation
$\stage + 1$ has been made). We let $\currentEIBV(\cdot; u, y)$
denote the EIBV where expectations are taken conditional on the data
available at stage $n$ and on an additional observation $y$ at $u$ at stage $n+1$.

\begin{criterion}[2-step look-ahead]
      The next observation location $\spatloc_{\stage + 1}$ is chosen among the minimizers in $\candidates$ of the criterion
      \begin{align}\label{critLA}
          C_{\text{2-steps}}(u) &= \mathbb{E}_{\stage}\left[\min_{\spatloc' \in
                  \candidates(\spatloc)} \EIBV_{\stage}\left(\spatloc' ; \spatloc,
      Y\right)\right]
      \end{align}
      where $Y$ is the random data realization of $\gp_{\spatloc}$ according to its
      conditional law at step $\stage$ with the
      dependence of the set of candidates on the current location
      having been made explicit for the second stage of measurements.
    \end{criterion}
    
    In a practical setting, the first expectation can be computed by
    Monte Carlo sampling of data $Y$ from its conditional
    distribution. For each of these data samples, the second
    expectation is solved using the closed-form expressions for EIBV
    provided by Proposition \ref{propo2}, now with conditioning on the
    first stage data already going into the co-Kriging updating
    equations.

\subsection{Simulation studies}
\label{sec:simulations}

\subsubsection{Static and Sequential Sampling Designs}
\label{sec:sampling_designs}

We compare three different static designs denoted
\textit{static\_north}, \textit{static\_east}, and
\textit{static\_zigzag} (a version of \textit{static\_north} where with some east-west transitions in a zigzag pattern) with the three described sequential approaches
\textit{naive}, \textit{myopic}, and \textit{look-ahead}. The static
AUV sampling paths are pre-scripted and cannot be altered.
For a fixed survey length, a closed-form expression for the EIBV is
available as in Proposition \ref{propo_eibv}. However, for the sequential
approaches this is not the case. For comparison, the properties are
therefore evaluated using Monte Carlo integration over several replicates
of realizations from the model while conducting simulated sequential
surveys for each one. An example of such a realization with a \textit{myopic}
strategy is shown in Fig.~\ref{fig:ebv_myopic}. 

We also compare predictive
performance measured by root mean square error (RMSE) for temperature
and salinity estimates as well as the variance reduction in these
two variables. It is important to note that the objective function
used by the AUV is focused on reducing the
EIBV, but we nevertheless expect that we will achieve good predictive
performance for criteria such as RMSE as well. Another non-statistical
criterion that is important for practical purposes is the computational
time needed for the strategy.

\begin{figure}[!b] 
\centering 
\subfigure[The waypoint graph.]{\includegraphics[width =
0.49\textwidth]{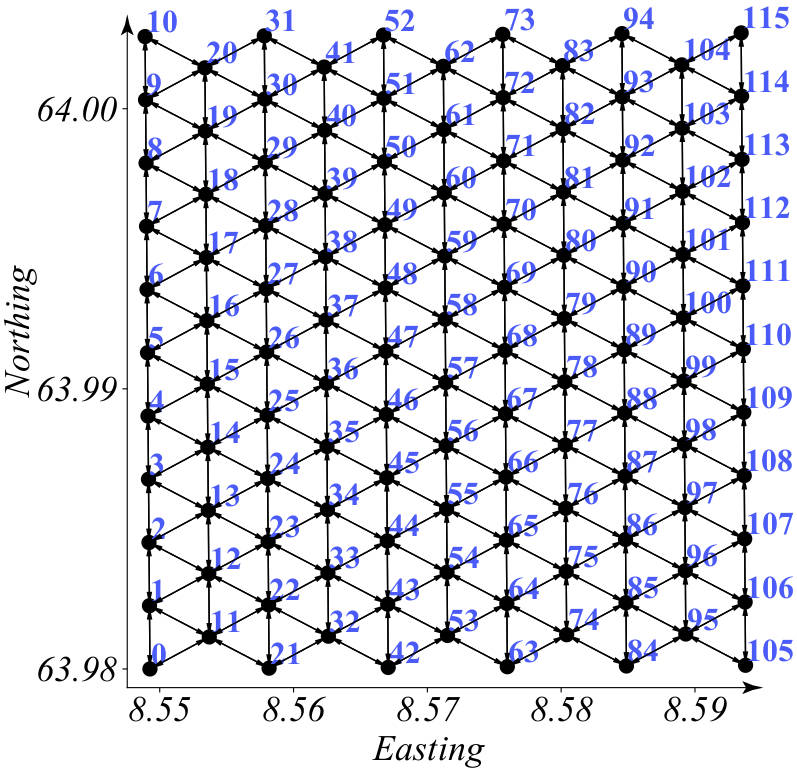}\label{fig:wp_graph_a}}
\hfill
\subfigure[The waypoint graph in 3D.]{\includegraphics[width =
0.49\textwidth]{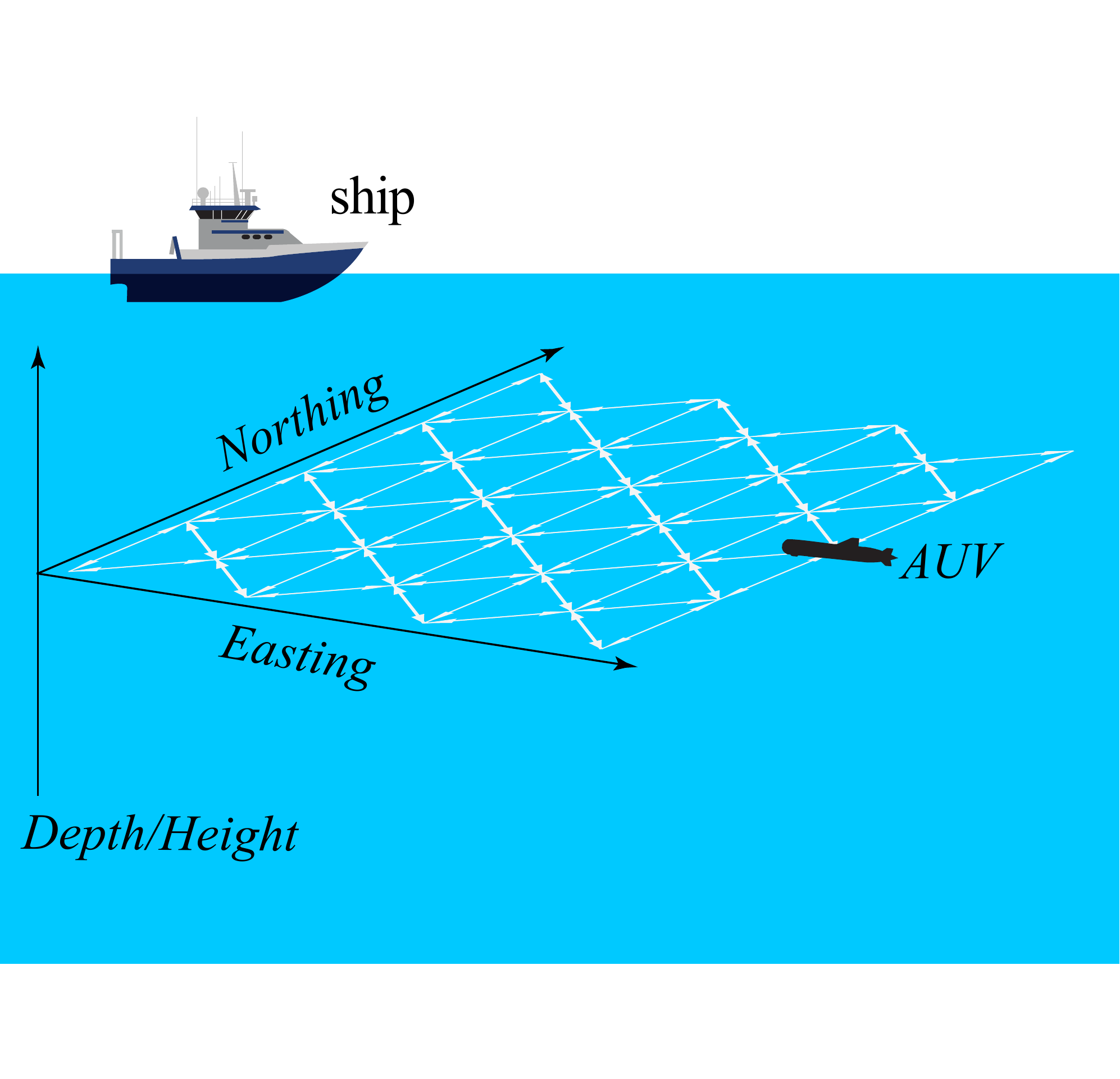}\label{fig:wp_graph_b}}
\caption{\ref{fig:wp_graph_a} The equilateral waypoint graph used to discretize the
trajectory choices over the $31\times31$ grid used to discretize the GRF. The AUV is set to start in node $53$.
\ref{fig:wp_graph_b} The waypoint grid shown in a 3D environment.}
\label{fig:wp_graph}
\end{figure}

Each strategy is conducted on an equilateral grid as shown in
Fig.~\ref{fig:wp_graph}. The AUV starts at the center coordinate at
the southern end of the domain (node 53). It then moves along edges in
the waypoint graph while collecting data which are assimilated onboard
to update the GRF model. This is used in the evaluation of the next
node to sample.  The procedure is run for $10$ stages. A total of 100
replicate simulations were conducted with all strategies.

\subsubsection{Simulation Results}

The results of the replicate runs are shown in
Fig.~\ref{fig:sim_results}, where the different criteria are plotted
as a function of survey distance. Fig.~\ref{fig:avg_ev} shows the
resulting drop in realized IBV for each of the six strategies. IBV
reduction is largest for the \textit{myopic} and
\textit{look-ahead} strategies, each performing almost equally; this
is expected as the two criteria (Eq. \eqref{critSEQ} and
\eqref{critLA}) are sensitive to differences in IBV. The \textit{static\_north} design also does well here because the path is
parallel to the boundary between the water masses.

\begin{figure}[!h]
  \centering
  \subfigure[IBV.]{\label{fig:avg_ev}\includegraphics[height=0.49\textwidth]{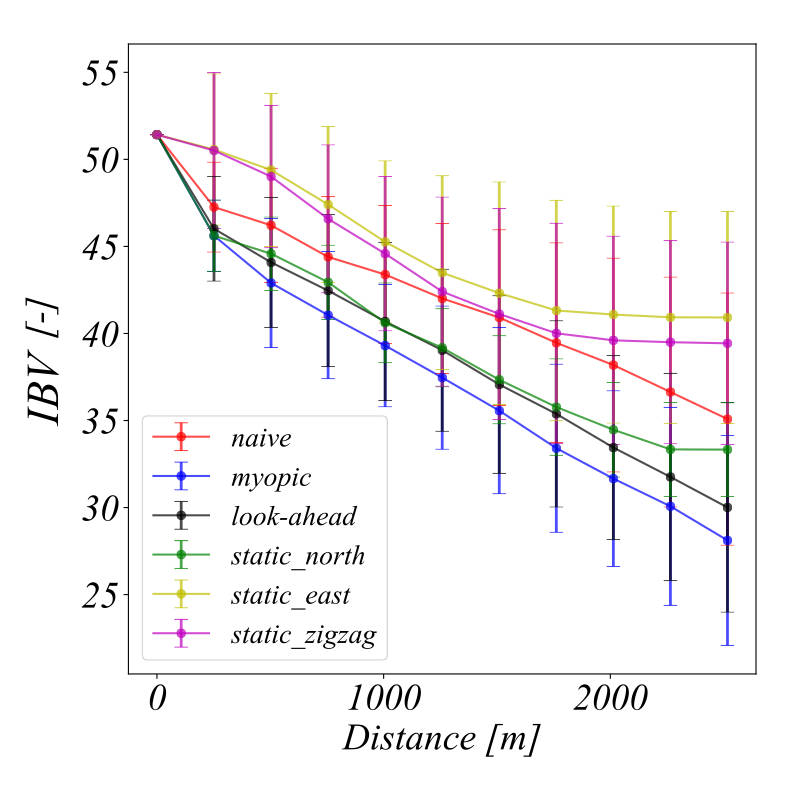}}
  \hfill \subfigure[RMSE between estimated field and
  truth.]{\label{fig:avg_rmse}\includegraphics[height=0.49\textwidth]{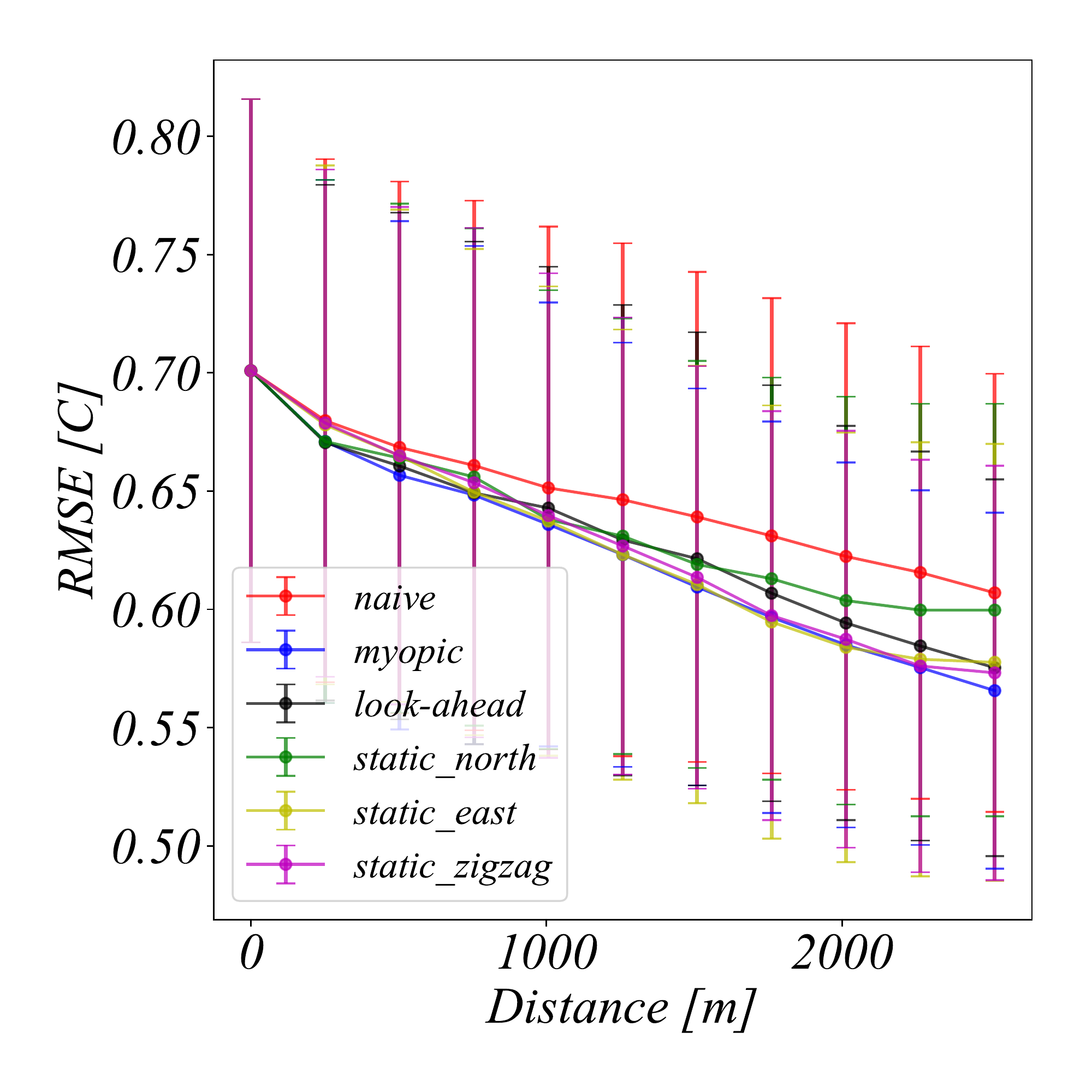}}
  \hfill \subfigure[Explained variance
  $\bR^{2}$.]{\label{fig:avg_r2}\includegraphics[height=0.49\textwidth]{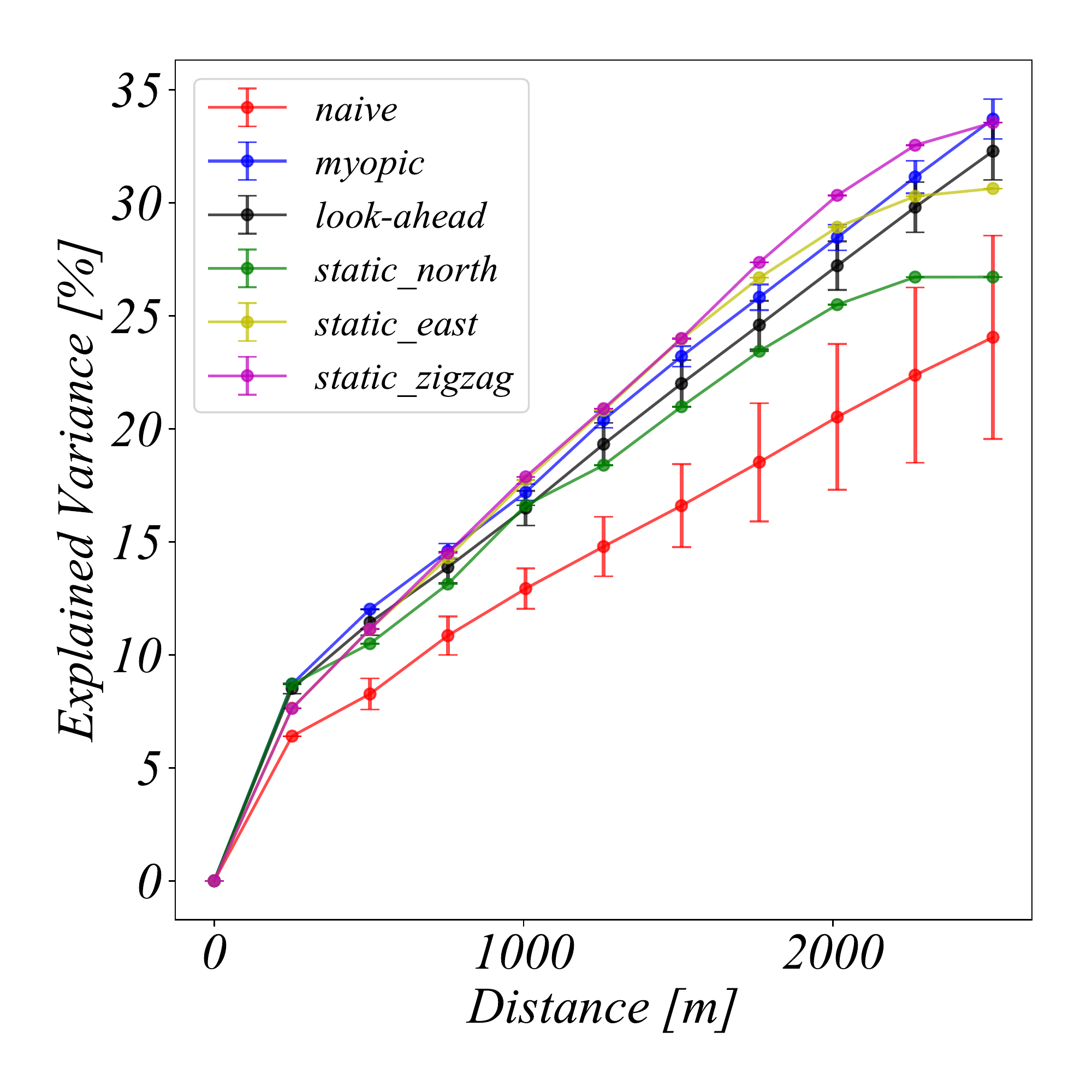}}
  \hfill \subfigure[Computational time for inference (the lines for \textit{naive}, \textit{static\_north}, \textit{static\_east}, and \textit{static\_zigzag}
  overlap).]{\label{fig:avg_time}\includegraphics[height=0.49\textwidth]{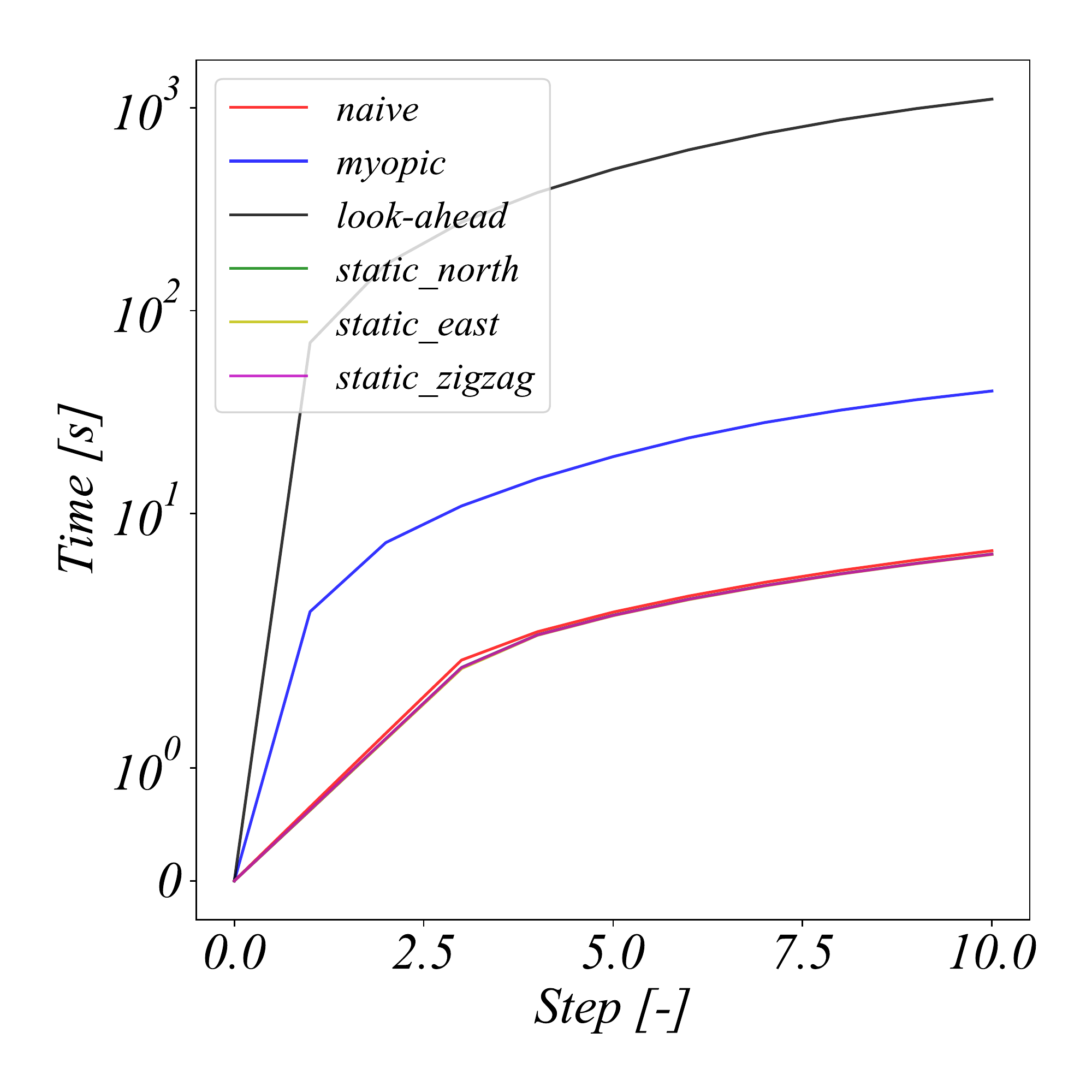}}
  \caption{Simulation results from 100 replicate simulations for 10
    sampling choices/stages on the grid. Vertical lines show variation
    in replicate results.}
\label{fig:sim_results}
\end{figure}

Fig.~\ref{fig:avg_rmse} and \ref{fig:avg_r2} show the resulting drop
in RMSE and increase in explained variance, respectively. Both
\textit{myopic} and \textit{look-ahead} strategies perform well here,
but some of the \textit{static\_east} and \textit{static\_zigzag} also
achieve good results because they cover large parts of the domain
without re-visitation. Sequential strategies targeting IBV will
sometimes not reach similar coverage, as interesting data may draw the
AUV into twists and turns. There is a relatively large variety in the
replicate results as indicated by the vertical lines. Nevertheless,
the ordering of strategies is similar.

Fig.~\ref{fig:avg_time} shows the computational effort: the
\textit{naive} strategy is on par with the static designs, while the
\textit{myopic} strategy is slower because it evaluates expected
values for all candidate directions at the waypoints. But it is still
able to do so in reasonable time, which allows for real-world
applicability. The \textit{look-ahead} strategy is much slower, reaching levels that
are nearly impractical for execution on an AUV. Some pruning of the
graph is performed to improve the performance, such as ruling out
repeated visitations. Further pruning of branches or inclusion of
other heuristics could be included for better performance. Then again,
the inclusion of such heuristics is likely a contributing factor for
the \textit{look-ahead} strategy failing to outperform the
\textit{myopic} strategy.

We studied the sensitivity of the results by modifying the input
parameters to have different correlations between temperature and
salinity, standard deviations, and spatial correlation range.  In all
runs, the \textit{myopic} and \textit{look-ahead} strategies perform
the best in terms of realized IBV, and much better than
\textit{naive}. The \textit{look-ahead} strategy seems to be
substantially better than the \textit{myopic} design only for very
small initial standard deviations or very large spatial correlation
range. 
We also ran simulation studies with only temperature data, and for
realistic correlation levels between temperature and salinity, the IBV
results are not much worse when only temperature data are
available. In addition to the comparison made in
Table~\ref{tab:sim_rhoab}, the current setting includes spatial
correlation and this likely strengthen the effect of temperature information. However, it seems that having temperature data alone
does a substantially worse job in terms of explained variance.

\section{Case Study - Mapping a River Plume}
\label{sec:case_study}

To demonstrate the applicability of using multivariate EPs and the IBV
to inform oceanographic sampling, we present a case study mapping a
river plume with an AUV. The experiment was performed in Trondheim,
Norway, surveying the Nidelva river (Fig.~\ref{fig:nidelven}). The
experiments were conducted in late Spring 2019, when there is still
snow melting in the surrounding mountains so that the river water is colder than the water in the fjord. The experiment was
focused along the frontal zone that runs more or less parallel to the
eastern shore.

\subsection{Model Specification}
\label{sec:exp_modeling}

The statistical model parameters were specified based on a short
preliminary survey where the AUV made an initial transect to determine
the trends in environmental conditions and correlation
structures. Based on the initial runs we get a reasonable idea of the temperature and salinity of river and ocean waters, and also specify the trend by linear regression, where both temperature and salinity were assumed to increase linearly with the west coordinate. Next, the residuals from the regression analysis were analyzed to specify the covariance parameters of the GRF model.
\begin{figure}[!b]
  \centering
  \subfigure[Residual plot.]{\includegraphics[width = 0.32\textwidth]{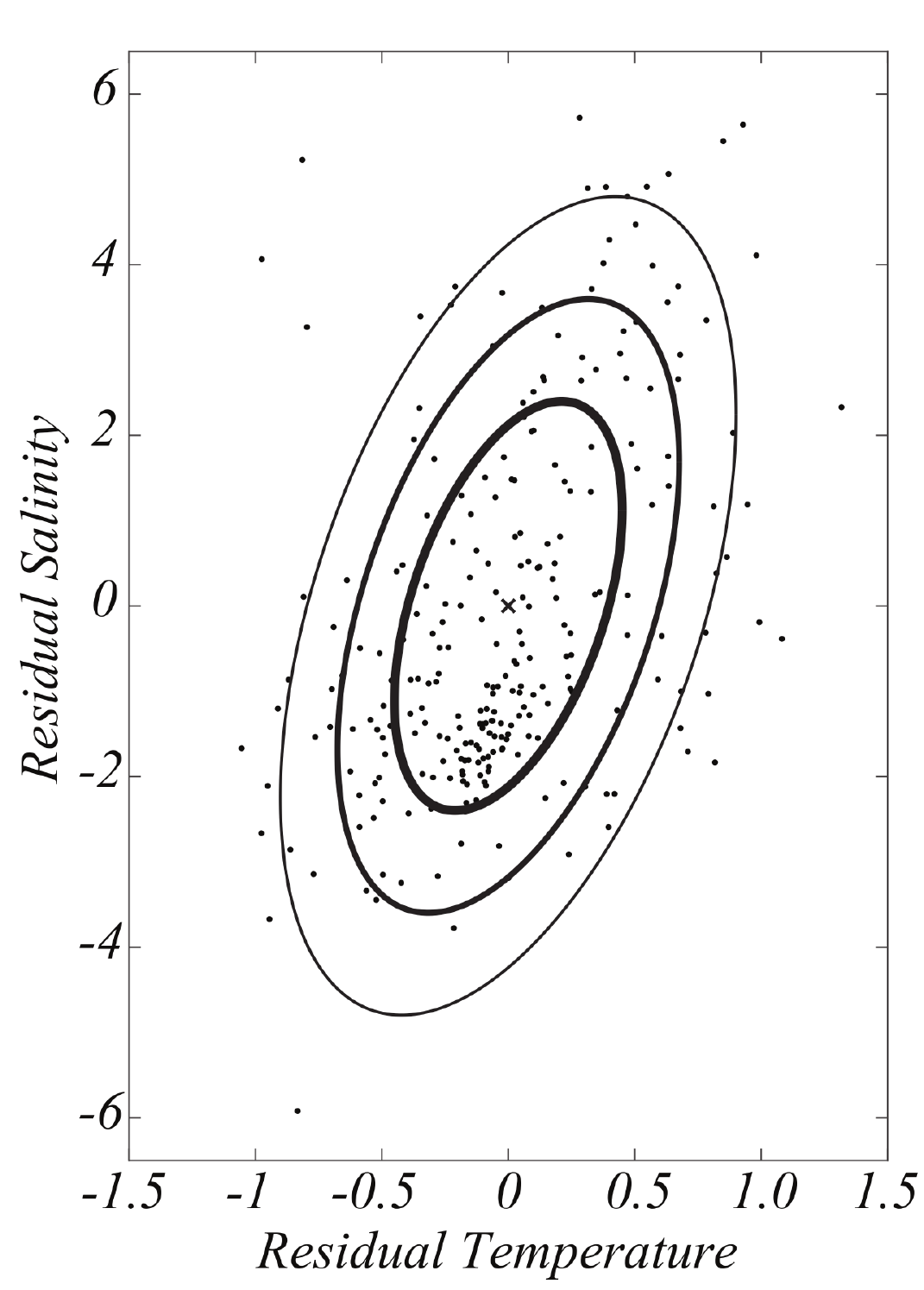}\label{fig:parest_a}}
  \hfill
  \subfigure[Empirical CDF.]{\includegraphics[width = 0.32\textwidth]{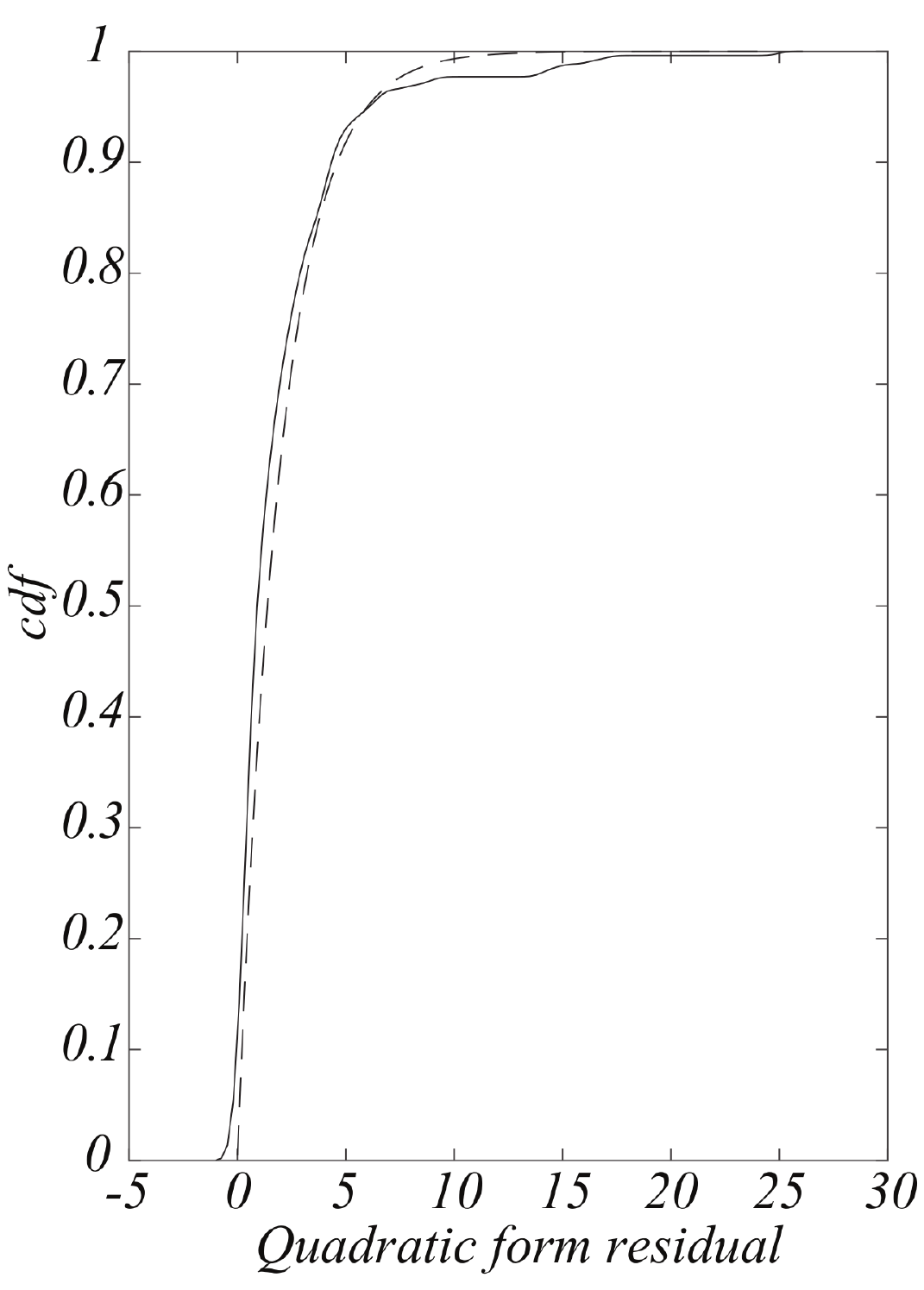}\label{fig:parest_b}}
  \hfill
  \subfigure[Empirical variogram.]{\includegraphics[width = 0.32\textwidth]{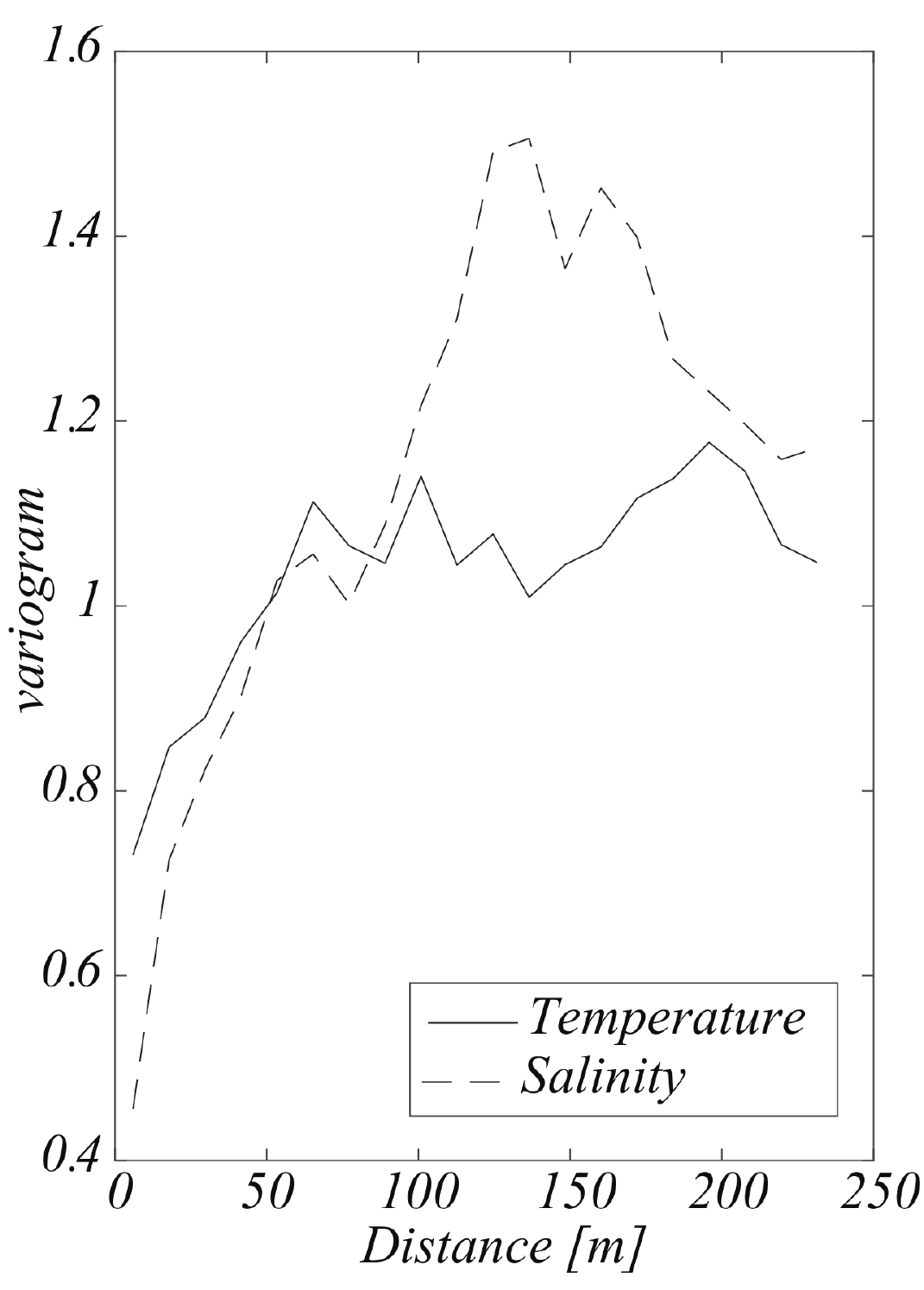}\label{fig:parest_c}}
  \caption{Data analysis from a preliminary trial experiment using the
    AUV. \ref{fig:parest_a} Residual plot of temperature and salinity
    along with Gaussian contours. \ref{fig:parest_b} Empirical CDF
    (solid) of the quadratic form of the residuals along with the
    theoretical CDF (dashed) of the $\chi^2$ distribution with two
    degrees of freedom. \ref{fig:parest_c} Empirical variogram of the
    salinity and temperature data.}
\label{fig:parest}
\end{figure}
Fig.~\ref{fig:parest} summarizes diagnostic plots of this
analysis. Fig.~\ref{fig:parest_a} shows a cross-plot of temperature
and salinity residuals after the regression mean values of salinity and
temperature are subtracted from the data. This scatter-plot of joint
residuals indicates larger variability in salinity than temperature,
and a positive correlation ($0.5$) between the two variables. Based on
the fitted bivariate covariance model (ellipses in
Fig.~\ref{fig:parest_a}), we can compute the scalar quadratic form of
the residuals, and if the model is adequate they should be approximately $\chi^2_2$ distributed. Fig.~\ref{fig:parest_b} shows
the empirical CDF of the quadratic forms (solid) together with the
theoretical CDF of the $\chi^2_2$ distribution (dashed). The modeled and
theoretical curves are rather similar, which indicates that the Gaussian
model with constant spatial variance and correlation fits reasonably well. Fig.~\ref{fig:parest_c} shows the
empirical variogram of the residuals for temperature and salinity. The decay is similar for the two, and seems to be negligible
after about $150$ m.  The working assumption of a separable covariance
function is hence not unreasonable.

Based on the analysis in Fig.~\ref{fig:parest}, the resulting
parameters are given in Table \ref{tab:experiment_param}. The
regression parameters shown here are scaled to represent the east and
west boundaries of the domain as seen in the preliminary transect
data, and the thresholds are intermediate values. These parameter
values were then used in field trials where we explored the
algorithm's ability to characterize the river plume front separating
the river and fjord water masses.

\begin{table}[!h]
\centering
\begin{tabular}{lrr}
\toprule
Parameter & Value & Source\\
\midrule
\rowcolor{Gray}
Cross correlation temperature and salinity & 0.5 & AUV observations\\
Temperature variance &  0.20 & AUV observations (variogram)\\
\rowcolor{Gray}
Salinity variance &  5.76 & AUV observations (variogram)\\
Correlation range  & 0.15 km & AUV observations (variogram)\\
\rowcolor{Gray}
River temperature  & $10.0\,^{\circ}\mathrm{C}$ & AUV observations\\
Ocean temperature $T_{ocean}$ & $11.0\,^{\circ}\mathrm{C}$ & AUV observations\\
\rowcolor{Gray}
River salinity $S_{river}$ & $14.0$ g/kg & AUV observations\\
Ocean salinity $S_{ocean}$ & $22.0$ g/kg & AUV observations\\
\rowcolor{Gray}
Threshold in temperature & $10.5\,^{\circ}\mathrm{C}$ & User specified \\
Threshold in salinity & $18.0$ g/kg & User specified \\
\rowcolor{Gray}
\bottomrule
\end{tabular}
\caption{Model and threshold parameters from an initial AUV
  survey. Observations were taken across the front while crossing from
  fresh, cold river water to saline and warmer ocean waters.}
\label{tab:experiment_param}
\end{table}

\subsection{Experimental Setup}

A Light AUV \citep{sousa2012lauv} (Fig.~\ref{fig:lauv}) equipped with
a 16 Hz Seabird Fastcat-49 conductivity, temperature, and depth (CTD)
sensor was used to provide salinity and temperature measurements.  The
AUV is a powered untethered platform that operates at $1$-$3$ m/s in
the upper water column. It has a multicore GPU NVIDIA Jetson TX1
(quad-core 1.91 GHz 64-bit ARM machine, a 2-MB L2 shared cache, and 4
GB of 1600 MHz DRAM) for computation onboard.  The sampling algorithm
was built on top of the autonomous Teleo-Reactive EXecutive
(\textit{T-REX}) framework \citep{py10,Rajan12,Rajan12b}. We assume
that the measurements are conditionally independent because the
salinity is extracted from the conductivity sensor which is different
from the temperature sensor. We specify variance $0.25^2$ for both
errors, which is based on a middle ground between the nugget effect in
the empirical variogram and the sensor specifications.

\begin{figure}[!h] 
\centering 
\includegraphics[width=0.98\textwidth]{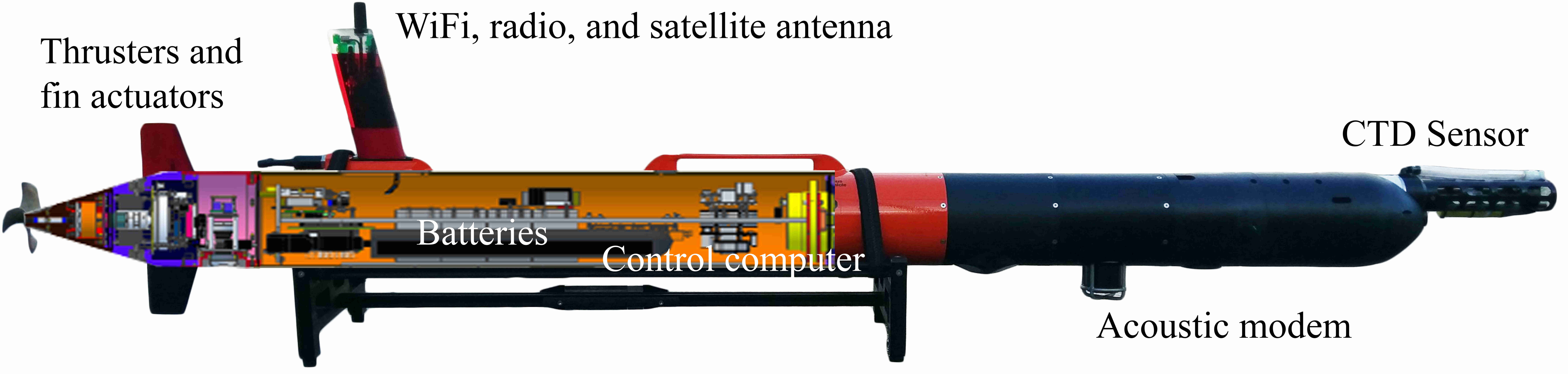}
\caption{The commercially available Light Autonomous Underwater
  Vehicle (LAUV) platform for upper water-column exploration used in
  our experiments.}
\label{fig:lauv}
\end{figure} 

The AUV was running a \textit{myopic} strategy to decide between
sampling locations on the waypoint graph distributed over an
equilateral grid, as shown in the grey-colored lattice in
Fig.~\ref{fig:map}.  At each stage, it takes the AUV about 30 seconds
to assimilate data and evaluate the EIBV for all the possible waypoint alternatives.  It
was set to start in the south-center part of the waypoint graph. A
survey was set to take approximately 40 minutes, visiting 15 waypoints
on the grid, with the vehicle running near the surface to capture the
plume. On its path from one waypoint to the next, the AUV collects
data with an update frequency of 30 seconds, giving three measurements
per batch in the updates at each stage.

\begin{figure*}[!h]
\centering
\subfigure[AUV survey area]{\includegraphics[height=0.41\textwidth]{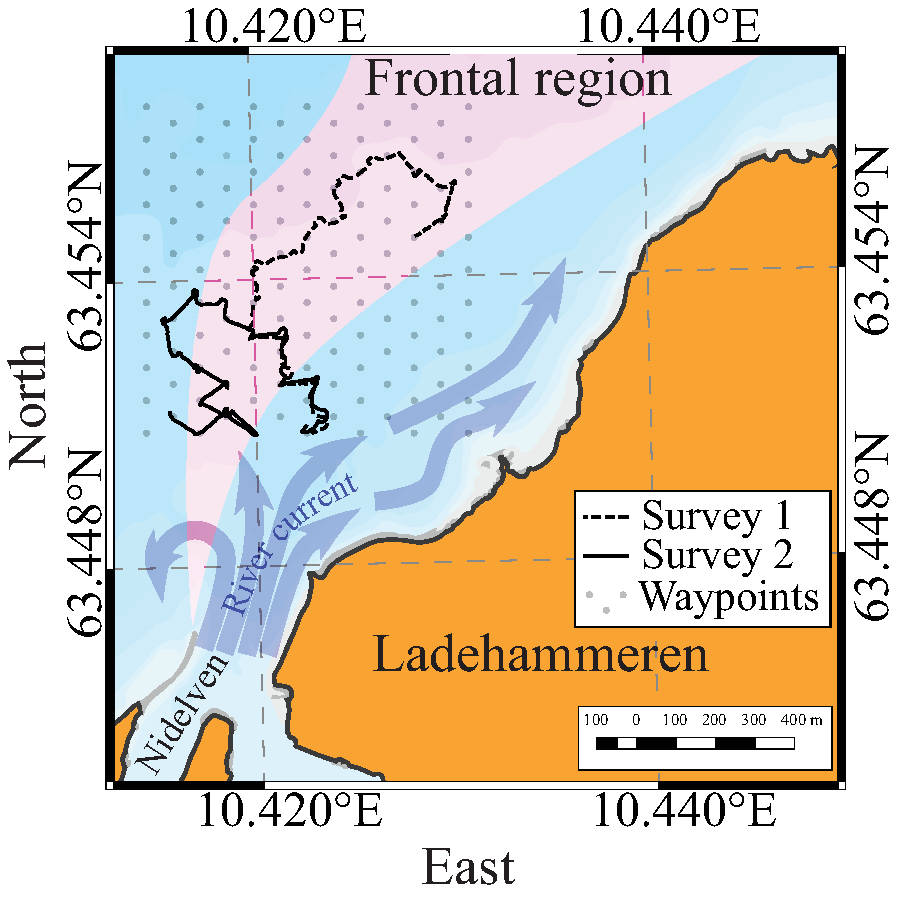}\label{fig:map}}
\hspace{0.3cm}
\subfigure[Temperature tracks]{\includegraphics[height=0.41\textwidth]{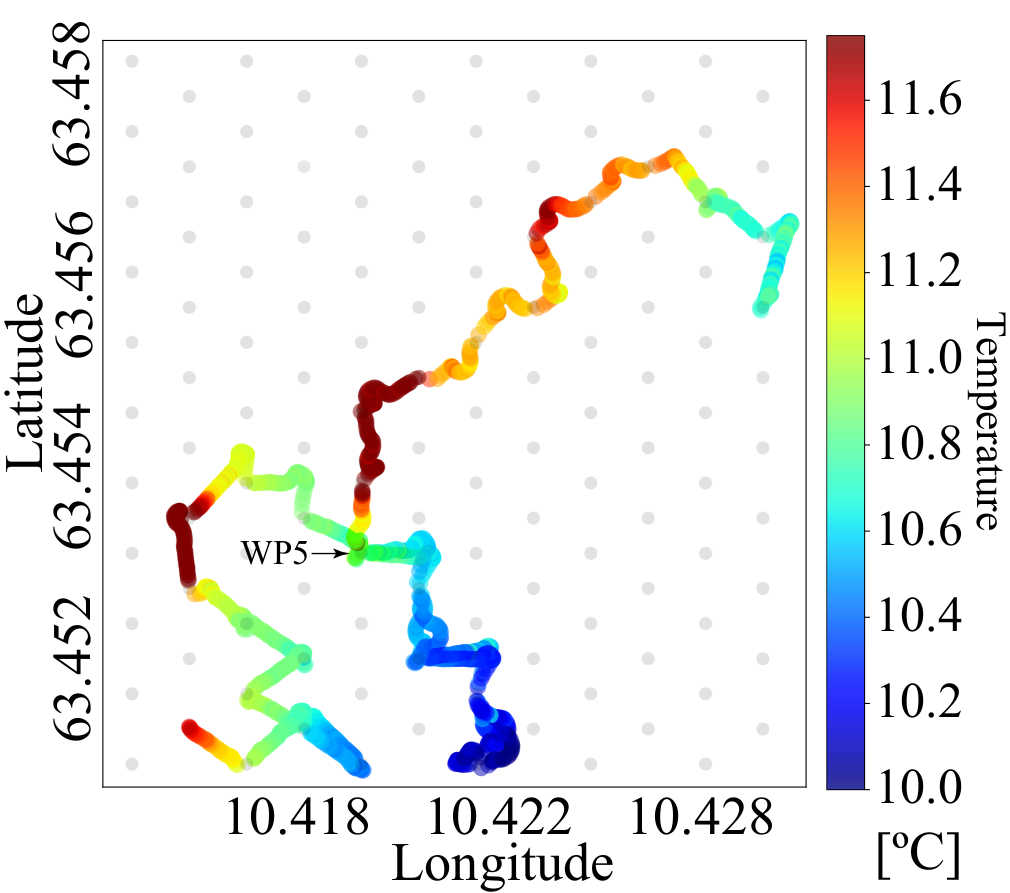}\label{fig:res_both}}

\subfigure[Survey 1]{\includegraphics[height=0.40\textwidth]{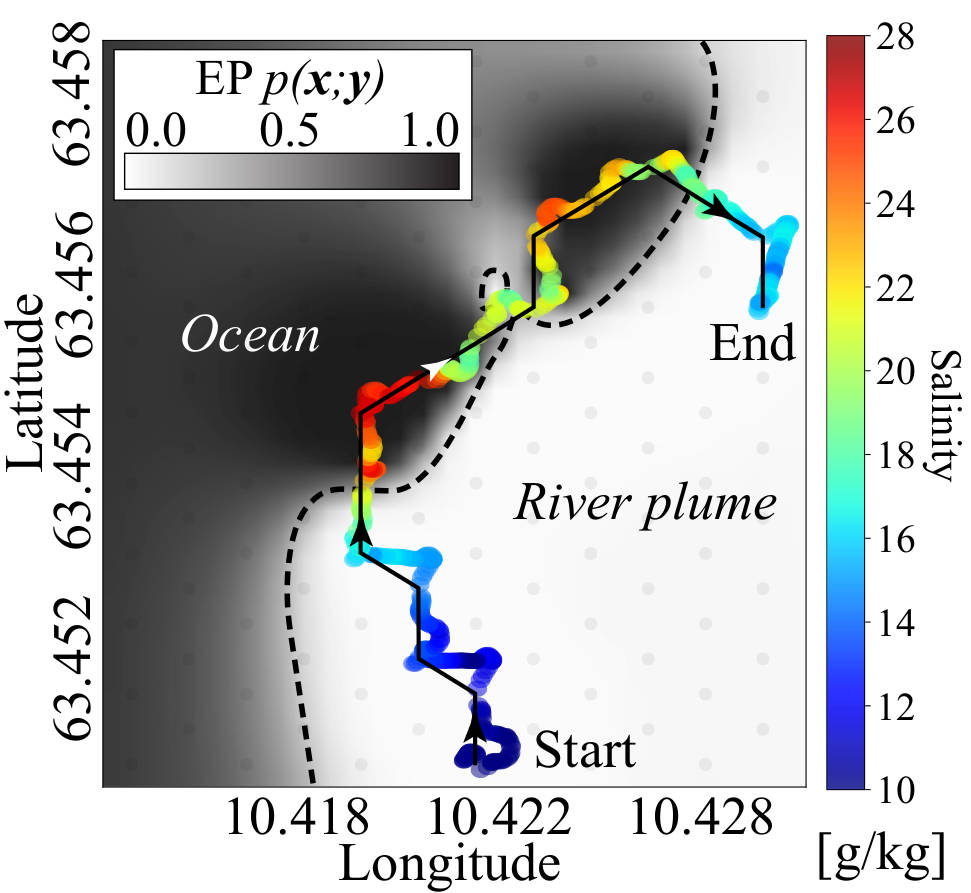}\label{fig:res1}}
\hspace{0.2cm}
\subfigure[Survey 2]{\includegraphics[height=0.40\textwidth]{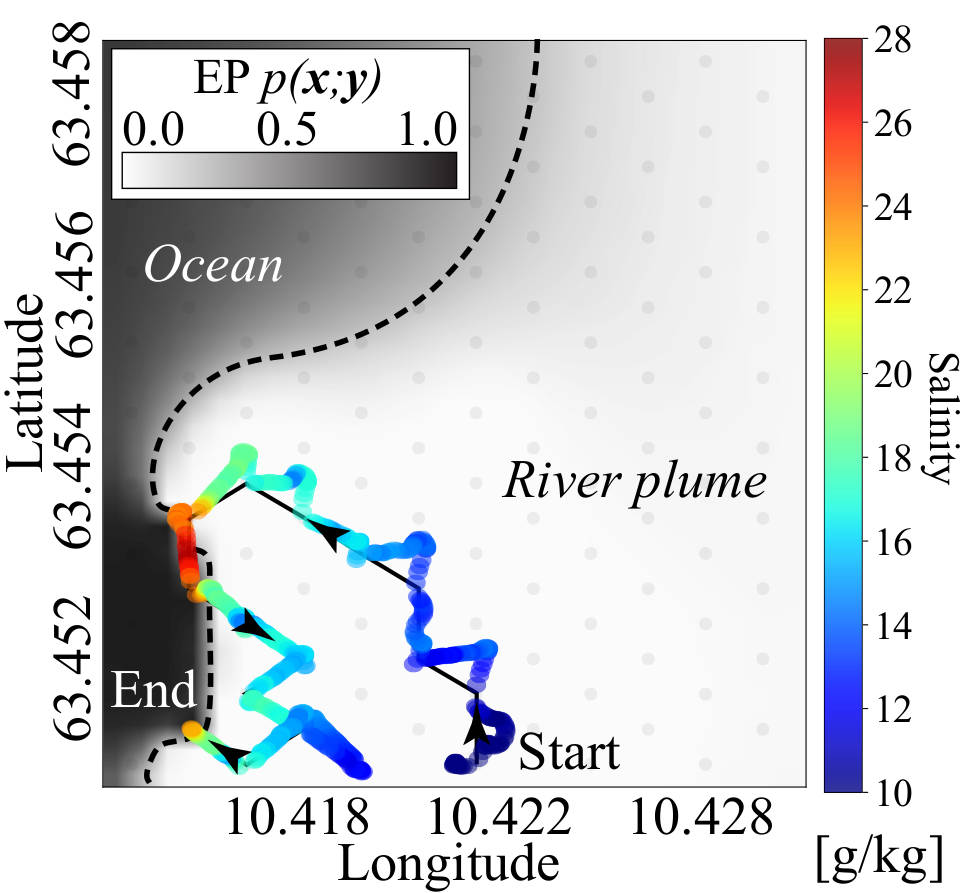}\label{fig:res2}}
\caption{Results from mapping the Nidelva river, Trondheim, Norway
  over two survey missions. \ref{fig:map} shows an overview of the
  survey area overlaid with the AUV path in black and dashed
  line. Note the shaded region indicating a typical frontal
  region. \ref{fig:res_both} shows the collected temperature data as
  colored trails. Note waypoint 5 (WP5) which indicates where the two
  surveys diverge. \ref{fig:res1} and \ref{fig:res2} shows the
  collected salinity data overlaid on the final EP, which indicate the
  AUVs statistical impression of the front. For both missions the
  temperature and salinity data correspond with an indication of the
  EP front. About 2 hours time separated the two runs.}
\label{fig:results}
\end{figure*}

\subsection{Results}

Two survey missions (1 and 2), were run successively, with a short
break in between. The resulting path of the selected waypoints are
shown in the map in Fig.~\ref{fig:map}, both within the expected
frontal region (shaded pink). The recorded temperatures are shown as
colored trails in Fig.~\ref{fig:res_both}, clearly indicating the
temperature difference between fjord and riverine waters. The salinity
data are then shown separately, overlaid with the estimated EP for
each survey in Fig.~\ref{fig:res1} and Fig.~\ref{fig:res2}.

Both surveys successfully estimated and navigated the separation zone,
crossing the frontal boundary multiple times. As conditions changed
slightly between the two surveys, the resulting trajectory (after
waypoint 5) is shown to deviate. Survey 1 continued northwards,
tracking the north-eastern portion of the front, while Survey 2 turned
west, mapping the south-western region.

The final predictions of the front location, represented by
conditional EPs in Fig.~\ref{fig:res1} and Fig.~\ref{fig:res2} as
dashed lines, correspond with one another. In both surveys they yield
a picture of the front being to the west in the southern portions of
the region and gradually bending off toward the north east. The amount
of exploration done by Survey 1 which turned north is greater than
Survey 2 which was coming close to the survey area borders in the
south-western corner.

\section{Closing remarks}
\label{sec:concl_disc}

This work builds on a multidisciplinary effort combining statistical
methods with robotic sampling for oceanographic applications. We show
how observation practices can gain efficiency and accuracy from
statistical techniques for spatial monitoring and demonstrate the the
need for real-time multivariate spatial sampling on autonomous
platforms.

In particular, we derive and show results for a real-world domain
characterizing water mass properties. The characterization of
uncertainties in random sets is extended in the vector-valued case
with new results for the expected integrated Bernoulli variance
reduction achieved by spatial sampling designs. This is provided in
semi-analytical form for static designs, and then extended to the
adaptive situations. The sequential derivations provide new insights
into efficient applications of adaptive data collection, as
demonstrated in our application.

The case study considers the upper water column in the river plume,
represented by a two dimensional grid. Extensions to three-dimensional
domains are not methodologically different, but likely approximate
calculations by concentrating numerical integration on terms in the
vicinity of the autonomous vehicle \citep{fossum18b}. While we did not
consider any temporal effects, which would be relevant on a larger
time scale, we do consider the extension to spatio-temporal modeling
and envision that advection-diffusion equations could be useful
\citep{sigrist2015stochastic,richardson2017sparsity}. For more complex
oceanographic phenomena, the methods will need to be extended to
non-Gaussian phenomena, possibly feature-based mixtures of Gaussian
processes which could potentially be run onboard augmented by
dynamical models. Running numerical models onboard a robotic vehicle
is currently infeasible, but high-resolution ocean models or remote
sensing data can be used to fit a more complex statistical model
\cite{davidson19}.

The spatio-statistical design criterion building on random sets is
relevant in our setting with different water properties.  We show
mathematical generality beyond the expected integrated Bernoulli
variance, for instance, that of volume uncertainties which is possibly
more relevant, but one that requires more computational resources.
Such criteria could be particularly useful in other oceanographic
settings related to mapping of algal-blooms, anoxic zones or open
water fronts \cite{costa19}.  Other criteria could also be relevant,
for instance, hybrid or multi-attribute criteria that could balance
goals of exploration and exploitation in this situation. Equally, such
techniques have significant use cases in downstream decision-making,
with policy makers and regulators who need to make difficult decisions
related to aquaculture or other marine resources. Value of information
analysis \citep{Eidsvik:15} could be used to evaluate whether
information is likely to result in improved decision-making, in such a
context.  We also foresee opportunities related to design of
experiments for multivariate processes using our notion of generalized
locations.

In our context the myopic strategy performs well, and due to
computational constraints we did not go in depth on the dynamic
programming solutions. There has been much work on finite horizon
optimization in the robotics literature including probabilistic road
maps and rapidly-exploring random trees \citep{karaman2011sampling},
but their statistical properties are unclear.  In some cases it is
also limiting to use a waypoint graph, and would be beneficial to
allow more continuous updates and navigation at the highest frequency
possible given limitations being onboard an AUV.  It is equally
interesting to explore the additional flexibility that can be gained
by having multiple vehicles co-temporally exploring a spatial or
spatio-temporal domain \citep{ferreira2019advancing}. Such an approach
would enable concurrent sampling in different parts of the space, or
opportunities to move in parallel to best capture the excursion set.
The value of information related to when and what to communicate (to
shore or to other vehicles) is also an interesting thrust for research
and likely to be useful for internet-of-things applications or
computer experiments where some observations or evaluations are rather
inexpensive, while others must only be done when they are really
valuable.

\section*{Acknowledgements}

TOF acknowledges support from the Centre for Autonomous Marine
Operations and Systems (AMOS), Center of Excellence, project number
223254, and the Applied Underwater Robotics Labortatory (AURLab). CT
and DG acknowledge support from the Swiss National Science Foundation,
project number 178858. JE and KR acknowledge support from Norwegian
research council (RCN), project number 305445. DG would like to
aknowledge support of Idiap Research Institute, his primary
affiliation in an early version of this manuscript. The authors would
also like to thank Niklas Linde of the University of Lausanne for
providing constructive feedback about this work, and members of the
NTNU AURLab for help with AUV deployments

\footnotesize
\bibliographystyle{imsart-nameyear}
\bibliography{ref}

\section*{Appendix}

\begin{propo}
    \label{propo1}
For a measurable random field $\gp$ and a locally finite measure $\mes$ on $\domain$, $\mes(\es)$ is a random variable and for 
any $r\geq 1$,
\begin{equation*}
\begin{split}
\mathbb{E}[\mes(\es)^r]
&=\int_{\domain^{r}} \jointExcuProb
\productMeasure
,
\end{split}
\end{equation*}

where the product measure is denoted as
$\nu^{\otimes}:=\bigotimes_{i=1}^r \nu$.
Here $\gp$ is defined on $\domain$, and for
$\bm{u}=\left(u^{(1)}, ..., u^{(r)}\right)\in \domain^r$, $\gp[\bm{u}]=\left(\gp[u^{(1)}], ...,
\gp[u^{(r)}]\right)\in \mathbb{R}^{\no r}$.
\medskip

In the particular case where $\gp$ is a multivariate Gaussian random field
we have
\begin{align*}
\jointExcuProb = \mathcal{N}_{\no r}(T^r; \meanUU, \covUU),
\end{align*}
where $\mathcal{N}_{\no r}(\cdot ; \meanUU, \covUU)$ is the Gaussian measure on $\mathbb{R}^{\no r}$ with mean $\meanUU$ 
and covariance matrix $\covUU$, respectively defined blockwise by
\begin{align*}
\meanUU&=\begin{pmatrix}\mu(u^{(1)})\\ \vdots\\ \mu(u^{(r)})\end{pmatrix}
\in \mathbb{R}^{\no r}, \\
\text{and } \covUU &= \begin{pmatrix}
\cov(\gp[u^{(1)}], \gp[u^{(1)}]) & \dots & \cov(\gp[u^{(1)}],
\gp[u^{(r)}])\\
\vdots & & \vdots\\
\cov(\gp[u^{(r)}], \gp[u^{(1)}]) & \dots & \cov(\gp[u^{(r)}],
\gp[u^{(r)}])\\
\end{pmatrix}\in \mathbb{R}^{pr\times pr},
\end{align*}
each of the $r\times r$ blocks of the latter matrix being itself a (cross-)covariance matrix of dimension $\no \times 
\no$.
Assuming further that $\covUU$ is non-singular, the probability of interest can be formulated in terms of the $\no 
r$-dimensional Gaussian probability density function
$\varphi_{\no  r}(\cdot;~\meanUU, \covUU)$ as
\begin{equation*}
\begin{split}
&\jointExcuProb
=
\int_{T^r} \varphi_{\no  r}\left(\bm{v};
    ~\meanUU, \covUU\right)
    \mathrm{d}\bm{v},
\end{split}
\end{equation*}
In the particular orthant case with $\T=(-\infty, t_1] \times \dots \times (-\infty, t_{r}]$,
the latter probability directly writes in terms of the multivariate Gaussian
cumulative distribution, 
this time by the way without requiring $\covUU$ 
to be non-singular:
\begin{equation*}
\begin{split}
\jointExcuProb
&=
\varPhi_{\no r}\left(\bm{t};~\meanUU, \covUU\right),
\end{split}
\end{equation*}
where we have used the notations
$t=(t_1,\dots,t_{\no}
)\in\mathbb{R}^{\no}$, $1_{r}=(1,\dots,1)\in \R^{r}$, and
$\bm{t}=1_{r}\otimes \bm{t}=(t_1,\dots,t_{\no},\dots,t_1,\dots,t_{\no})
\in \R^{\no r}$.
\end{propo}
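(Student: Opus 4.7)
\textbf{Proof proposal for Proposition \ref{propo1}.}

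My plan is to reduce everything to a Fubini/Tonelli argument after first writing the random volume $\nu(\Gamma)$ as an integral of indicator functions. The starting point is the identity
\begin{equation*}
\nu(\Gamma) \;=\; \int_{\domain} \mathbf{1}_{\{\gp[u]\in T\}}\, d\nu(u),
\end{equation*}
which makes sense provided $(u,\omega)\mapsto \mathbf{1}_{\{Z_u(\omega)\in T\}}$ is jointly measurable; this follows from the standing assumption that $Z$ is a measurable random field and that $T$ is a Borel set. Local finiteness of $\nu$ together with joint measurability guarantees that $\nu(\Gamma)$ is a bona fide random variable. Raising to the $r$-th power and using Tonelli's theorem to move the product inside a single iterated integral over $\domain^r$, I get
\begin{equation*}
\nu(\Gamma)^r \;=\; \int_{\domain^r} \prod_{i=1}^{r} \mathbf{1}_{\{Z_{u^{(i)}}\in T\}}\, d\nu^{\otimes}(\boldsymbol{u}) \;=\; \int_{\domain^r} \mathbf{1}_{\{Z_{\boldsymbol{u}}\in T^r\}}\, d\nu^{\otimes}(\boldsymbol{u}).
\end{equation*}
The integrand is non-negative, so applying Tonelli a second time to exchange expectation and the $\nu^{\otimes}$-integral yields the first formula, since $\mathbb{E}[\mathbf{1}_{\{Z_{\boldsymbol{u}}\in T^r\}}] = \mathbb{P}(Z_{\boldsymbol{u}}\in T^r)$.

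For the Gaussian part, the key observation is that when $Z$ is a multivariate Gaussian random field, the stacked vector $Z_{\boldsymbol{u}} = (Z_{u^{(1)}},\dots,Z_{u^{(r)}})$ is $\no r$-dimensional Gaussian; its mean vector and block covariance matrix are precisely $\mu(\boldsymbol{u})$ and $K(\boldsymbol{u},\boldsymbol{u})$ as defined in the statement, because each cross-block $\operatorname{Cov}(Z_{u^{(i)}}, Z_{u^{(j)}})$ is the $\no\times\no$ cross-covariance supplied by the matrix covariance function of the field. The representation of $\mathbb{P}(Z_{\boldsymbol{u}}\in T^r)$ as a Gaussian measure of $T^r$ is then immediate; under non-singularity of $K(\boldsymbol{u},\boldsymbol{u})$ this measure admits the density $\varphi_{\no r}(\,\cdot\,;\mu(\boldsymbol{u}), K(\boldsymbol{u},\boldsymbol{u}))$, which gives the integral form.

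The orthant case $T=(-\infty,t_1]\times\cdots\times(-\infty,t_\no]$ reduces the event $\{Z_{\boldsymbol{u}} \in T^r\}$ to a conjunction of componentwise inequalities of the form $Z_{u^{(i)},\ell}\leq t_\ell$ across $i=1,\dots,r$ and $\ell=1,\dots,\no$. Concatenating the thresholds into the vector $\boldsymbol{t}=1_r\otimes t \in \mathbb{R}^{\no r}$ lets me recognize the probability as the value of the $\no r$-dimensional Gaussian CDF $\varPhi_{\no r}(\boldsymbol{t};\mu(\boldsymbol{u}), K(\boldsymbol{u},\boldsymbol{u}))$; this expression is well-defined even when $K(\boldsymbol{u},\boldsymbol{u})$ is singular, because the CDF can be defined by a weak limit of non-degenerate Gaussians with covariance $K(\boldsymbol{u},\boldsymbol{u}) + \epsilon I$ as $\epsilon\downarrow 0$, and dominated convergence then transfers this back to the integral expression for $\mathbb{E}[\nu(\Gamma)^r]$.

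The only genuinely delicate step is verifying joint measurability of $(u,\omega)\mapsto Z_u(\omega)$ and the ensuing measurability of $\Gamma$ as a random closed set, so that $\nu(\Gamma)$ is a random variable and Tonelli can be applied; for Gaussian fields with continuous trajectories this is standard, but it should be flagged as the hypothesis doing the real work. Everything else is a direct unpacking of the Gaussian law of the finite-dimensional marginal $Z_{\boldsymbol{u}}$, and the handling of the (possibly) singular covariance in the orthant case via an $\epsilon$-regularization.
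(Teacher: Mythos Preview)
Your proof is correct and follows essentially the same approach as the paper: write $\nu(\Gamma)$ as an integral of indicators, expand the $r$-th power as an integral over $\domain^r$, and interchange expectation with integration via Fubini/Tonelli (the paper attributes this step to Robbins' theorem). Your added remarks on the $\epsilon$-regularization for the singular-covariance orthant case and on joint measurability being the key hypothesis go slightly beyond what the paper spells out, but they are consistent with and complement the paper's argument.
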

\begin{proof}
That $\mes(\es)$ defines indeed a random variable follows from Fubini's theorem
relying on the joint measurability of
$(\x, \omega) \to \mathbbm{1}_{\es(\omega)}(\x)$,
itself inherited from the assumed measurability for
$(\x, \omega) \to \gp[\x](\omega)$ and $T$, respectively. From there, following the steps of Robbins' theorem \cite{Robbins1944}, we find that
\begin{equation*}
\begin{split}
\mathbb{E}[\mes(\es)^{r}]
&=\mathbb{E}\left[\left(\int_{\domain} \mathbbm{1}_{\gp[u] \in T} ~d\mes(u) \right)^{r} \right]
=\mathbb{E}\left[ \prod_{i=1}^{r} \left(
        \int_{\domain} \mathbbm{1}_{\gp[\uu^{(i)}] \in T} ~d\mes(\uu^{(i)})
\right) \right] \\
&
=
\mathbb{E}\left[
\int_{\domain^{r}}
\mathbbm{1}_{\gp[\uu^{(1)}] \in T,\dots, \gp[\uu^{(r)}]  \in T}
\productMeasure
\right]
=\int_{\domain^{r}}
\jointExcuProb
\productMeasure.
\end{split}
\end{equation*}
The rest consists in expliciting the probability of $T\times \dots \times T$ under the multivariate Gaussian distribution of
$\left(\gp[u^{(1)}], \dots,  \gp[u^{(r)}] \right)$.
\end{proof}

The propositions below provide formulae for computations of expectations of moments of multivariate gaussian CDFs.

\begin{propo}
    \label{propo2}
Let $p, q, h \geq 1$, $a \in \R^p$, $B \in \R^{p\times q}$,
and $\covN$, $\covV$ be two covariance matrices in
$\R^{p\times p}$ and $\R^{q\times q}$, respectively.
Then, for $V \sim \mathcal{N}_{q}(0_q, \covV)$,
\begin{equation*}
\mathbb{E}\left[ \varPhi_{p}\left( a + BV; \covN \right)^h \right]
=
\varPhi_{ph}
\left(
    \bm{a}
;~
\bm{\Sigma}
\right),
\end{equation*}
where the vector $\bm{a} \in \R^{p h}$ is defined as
$\bm{a} := 1_h\otimes a = 
\left(a, \dots , a
\right)'$
 and the $p h\times p h$ covariance matrix is given by
 $\bm{\Sigma} := 
1_h 1_h'\otimes B\covV B' + I_h\otimes \covN$.
\end{propo}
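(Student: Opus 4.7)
\textbf{Proof plan for Proposition~\ref{propo2}.}

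The plan is to use an augmentation trick: introduce $h$ auxiliary i.i.d.\ copies of a centered Gaussian vector so that raising a CDF to the power $h$ becomes a single joint orthant probability, and then read off the required mean vector and covariance matrix by Gaussian algebra.

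First I would introduce independent random vectors $N_1,\dots,N_h \stackrel{\text{i.i.d.}}{\sim} \mathcal{N}_p(0_p, \covN)$ that are also independent of $V$. Observe that
\begin{equation*}
\varPhi_p(a+BV; \covN) = \mathbb{P}(N_1 \leq a+BV \mid V),
\end{equation*}
so by conditional independence of the $N_i$'s given $V$,
\begin{equation*}
\varPhi_p(a+BV; \covN)^h = \prod_{i=1}^h \mathbb{P}(N_i \leq a+BV \mid V) = \mathbb{P}\bigl(N_i \leq a+BV,\ i=1,\dots,h \,\big|\, V\bigr).
\end{equation*}
Taking expectation with respect to $V$ and using the tower property,
\begin{equation*}
\mathbb{E}\bigl[\varPhi_p(a+BV; \covN)^h\bigr] = \mathbb{P}\bigl(N_i - BV \leq a,\ i=1,\dots,h\bigr).
\end{equation*}

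Next I would set $W_i := N_i - BV$ for $i=1,\dots,h$ and stack them into $\bm{W}=(W_1^\top,\dots,W_h^\top)^\top \in \mathbb{R}^{ph}$. Since $\bm{W}$ is a linear transformation of the Gaussian vector $(N_1,\dots,N_h,V)$, it is itself centered Gaussian. Its covariance is computed block by block: for any $i,j \in \{1,\dots,h\}$,
\begin{equation*}
\cov(W_i, W_j) = \delta_{ij}\, \covN + B \covV B^\top,
\end{equation*}
since the $N_i$'s are mutually independent and independent of $V$. Assembling these $h^2$ blocks yields exactly
\begin{equation*}
\cov(\bm{W}) = I_h \otimes \covN + 1_h 1_h^\top \otimes B \covV B^\top = \bm{\Sigma}.
\end{equation*}

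Finally, the event $\{N_i - BV \leq a \text{ for all } i\}$ is precisely $\{\bm{W} \leq \bm{a}\}$ with $\bm{a} = 1_h \otimes a$, so
\begin{equation*}
\mathbb{P}(\bm{W} \leq \bm{a}) = \varPhi_{ph}(\bm{a}; \bm{\Sigma}),
\end{equation*}
which is the desired identity. The only mildly delicate step is verifying that the cross-covariance blocks organize into the stated Kronecker structure; once the independence structure of $(N_1,\dots,N_h,V)$ is in place this is a direct block-matrix computation, and it is also the place where one must be careful about the convention for the argument $\bm{a}$ of $\varPhi_{ph}$. No further integration is required, as the probabilistic reinterpretation bypasses direct manipulation of the multivariate Gaussian density.
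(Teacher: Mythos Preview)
Your proposal is correct and follows essentially the same route as the paper: introduce $h$ i.i.d.\ auxiliary vectors $N_i\sim\mathcal{N}_p(0,\covN)$ independent of $V$, rewrite $\varPhi_p(a+BV;\covN)^h$ as a joint conditional orthant probability, apply the tower property, and read off the distribution of the stacked vector $W_i=N_i-BV$. The paper's proof is virtually identical in structure and detail, down to the computation $\cov(W_i,W_j)=B\covV B'+\delta_{ij}\covN$.
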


\begin{remark}
In blockwise representation, $\bm{\Sigma}$ can be expressed as follows:
\begin{align*}
\begin{pmatrix}
    \covN & &\\
        & \ddots &\\
        &   & \covN
\end{pmatrix}
+
\begin{pmatrix}
B\covV B' & \dots & B\covV  B'\\
\vdots & & \vdots\\
B\covV B' & \dots & B\covV B'\\
\end{pmatrix}
\end{align*}
\end{remark}

\begin{proof}
By definition of $\Phi_{p}$, for $N\sim \mathcal{N}_{p}(0_{p},\covN)$,
$$
\mathbb{P}(N\leq a + BV | V)
=
\varPhi_{p}\left( a + BV; \covN \right).
$$
Now for $\varPhi_{p}\left( a + BV; \covN \right)^h$, provided that the probability space is sufficiently large to accomodate $h$ independent Gaussian random vectors $N_i\sim \mathcal{N}_{p}(0,\covN)$ (which is silently assumed here), using the former equality delivers
$$
\varPhi_{p}\left( a + BV; \covN \right)^h
=
\prod_{i=1}^h \mathbb{P}(N_i\leq a + BV | V).
$$
Now by independence of the $N_i$'s we obtain the joint conditional probability
$$
\prod_{i=1}^h \mathbb{P}(N_i\leq a + BV | V)
=
\mathbb{P}(N_1\leq a + BV, \dots, N_h\leq a + BV| V),
$$
whereof, by virtue of the law of total expectation,
\begin{equation*}
\begin{split}
\mathbb{E}\left[ \varPhi_{p}\left( a + BV; \covN \right)^h \right]
&=\mathbb{E}\left[\mathbb{P}(N_1\leq a + BV, \dots, N_h\leq a + BV| V)\right]\\
&=\mathbb{P}(N_1\leq a + BV, \dots, N_h\leq a + BV)\\
&=\mathbb{P}(W_1 \leq a, \dots, W_h\leq a)\\
&=\varPhi_{ph}
\left(
1_{h} \otimes a
;
(1_{h}1_{h}')\otimes (B\Sigma_{V} B') + 
I_{h}\otimes \covN
\right),
\end{split}
\end{equation*}
where $\mathbf{W}=(W_1,\dots,W_h)$ with $W_i=N_i- BV \ (1\leq i \leq h)$
and the last line follows $\mathbf{W}$ forming a Gaussian vector (by global independence of the $N_i$'s and $V$) and from the definition of $\varPhi_{p h}$. The covariance matrix $\mathbf{\Sigma}$ of $\mathbf{W}$ is obtained by noting that $\operatorname{cov}(W_i,W_j)=B \covV B' + \delta_{ij} \covN \ 
(i,j \in \{1,\dots,h\})$. 
\end{proof}

We now generalize Proposition~\ref{propo2} to the case of multivariate monomials in orthant probabilities with thresholds affine in a common Gaussian vector.

\begin{propo}
    \label{propo3}
Let $g, p, q\geq 1$, $h_{1},\dots, h_{g}\geq 1$ with $H=\sum_{i=1}^g h_i$, $a_{i} \in \R^{p}$, $B_{i}\in \R^{p \times q}$, and covariance matrices $\covN_i \in \R^{p \times p}$ $(1\leq i \leq g)$. Then, for any covariance matrix $\covV \in \R^{q\times q}$ and $V\sim\mathcal{N}_{q}(0_q,\covV)$,
    \begin{equation}
    \mathbb{E}\left[ \prod_{i=1}^{g} \varPhi_{p}\left(a_i + B_{i}V; \covN_{i} \right)^{h_i} \right]
    =
\varPhi_{p H}
\left(
    \bm{a}
;
\mathbf{\Sigma}
\right),
\end{equation}
with $\bm{a}=(1_{h_1}\otimes a_1, \dots, 1_{h_g}\otimes a_{g}) \in \R^{p H}$
and $\mathbf{\Sigma}\in \R^{p H \times p H}$ is defined blockwise by $(\Sigma_{i,j})_{i,j \in \{1,\dots, g\}}$ where, for any $i,j \in \{1,\dots, g\}$, 
\begin{equation}
\Sigma_{i,j}=
(1_{h_{i}}1_{h_{j}}')\otimes (B_{i}\Sigma_{V} B_{j}') + \delta_{i,j}(I_{h_{i}}\otimes \covN_{i}) \in \R^{p h_{i} \times p h_{j}}.
\end{equation}
\end{propo}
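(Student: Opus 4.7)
The plan is to mimic the proof of Proposition~\ref{propo2} by introducing an enlarged collection of auxiliary Gaussian vectors, one for each factor appearing in the product, so that the product of $\Phi_p$ values rewrites as a joint conditional orthant probability for an overall Gaussian vector. The extra bookkeeping, compared with Proposition~\ref{propo2}, just amounts to keeping track of two indices $(i,k)$ with $1\leq i \leq g$ and $1\leq k\leq h_i$.

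The first step is, for each $i\in\{1,\dots,g\}$, to introduce $h_i$ i.i.d.\ copies $N_{i,1},\dots,N_{i,h_i}\sim\mathcal{N}_p(0_p,\covN_i)$, with all families $(N_{i,\cdot})_{i=1}^g$ mutually independent and independent of $V$. Exactly as in Proposition~\ref{propo2}, conditioning on $V$ gives
\begin{equation*}
\varPhi_p(a_i+B_iV;\covN_i)^{h_i}
=\prod_{k=1}^{h_i}\mathbb{P}(N_{i,k}\leq a_i+B_iV\,|\,V).
\end{equation*}
Taking the product over $i$ and using conditional independence of all the $N_{i,k}$'s given $V$, the integrand becomes
\begin{equation*}
\prod_{i=1}^g\varPhi_p(a_i+B_iV;\covN_i)^{h_i}
=\mathbb{P}\bigl(N_{i,k}\leq a_i+B_iV,\ 1\leq i\leq g,\ 1\leq k\leq h_i\,\big|\,V\bigr).
\end{equation*}
Applying the law of total expectation and reorganizing each inequality as $W_{i,k}:=N_{i,k}-B_iV\leq a_i$ then yields
\begin{equation*}
\mathbb{E}\!\left[\prod_{i=1}^g \varPhi_p(a_i+B_iV;\covN_i)^{h_i}\right]
=\mathbb{P}\bigl(W_{i,k}\leq a_i,\ 1\leq i\leq g,\ 1\leq k\leq h_i\bigr).
\end{equation*}

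Next I would check that $\mathbf{W}:=(W_{i,k})$ is a centered Gaussian vector of dimension $pH$ (which is immediate by global independence of $V$ and the $N_{i,k}$'s) and compute its covariance block by block. Because $N_{i,k}\perp N_{j,l}$ unless $(i,k)=(j,l)$, one gets
\begin{equation*}
\operatorname{cov}(W_{i,k},W_{j,l})=B_i\covV B_j^\top+\delta_{ij}\delta_{kl}\covN_i.
\end{equation*}
Stacking the components lexicographically, first by $i$ then by $k$, the $(i,j)$-block of size $ph_i\times ph_j$ is exactly $(1_{h_i}1_{h_j}^\top)\otimes(B_i\covV B_j^\top)+\delta_{ij}(I_{h_i}\otimes \covN_i)=\Sigma_{i,j}$, which matches the stated $\mathbf{\Sigma}$. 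The threshold vector stacks as $\bm{a}=(1_{h_1}\otimes a_1,\dots,1_{h_g}\otimes a_g)$, and recognizing the orthant probability as $\varPhi_{pH}(\bm{a};\mathbf{\Sigma})$ concludes the proof.

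The main obstacle is less conceptual than notational: I need to be careful that the Kronecker identity $(1_{h_i}1_{h_j}^\top)\otimes (B_i\covV B_j^\top)$ correctly reproduces the shared contribution from $V$ across all $(k,l)$ pairs, while the $\delta_{ij}(I_{h_i}\otimes \covN_i)$ term only picks up diagonal contributions within a single group. Everything else (Gaussianity of $\mathbf{W}$, measurability, existence of enlarged probability space) is standard and parallels the proof of Proposition~\ref{propo2}.
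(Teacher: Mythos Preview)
Your proposal is correct and follows essentially the same approach as the paper's own proof: introduce globally independent auxiliary vectors $N_{i,k}\sim\mathcal{N}_p(0_p,\covN_i)$, rewrite the product of $\Phi_p$ values as a conditional joint orthant probability, apply the law of total expectation, and identify the covariance structure of $W_{i,k}=N_{i,k}-B_iV$. If anything, you spell out the block-covariance computation and Kronecker identification slightly more explicitly than the paper does.
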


\begin{remark}
Using blockwise representation for the blocks themselves delivers 
\begin{equation*}
\Sigma_{ij} =
\begin{pmatrix}
B_i\Sigma_{V} B_j' & \dots & B_i\Sigma_{V} B_j'\\
\vdots & & \vdots\\
B_i\Sigma_{V} B_j' & \dots & B_i\Sigma_{V} B_j'\\
\end{pmatrix}
+
\delta_{ij}
\begin{pmatrix}
    \covN_i & &\\
        & \ddots &\\
        &   & \covN_i
\end{pmatrix}
\end{equation*}
Here each $\Sigma_{ij}$ is made of $h_i$ times $h_j$
(vertically/horizontally) $p \times p$ sub-blocks, hence possesses $ph_i$ lines and $ph_j$ columns.
\end{remark}

\begin{proof}
    The proof relies (again) heavily on the fact that, by definition of $\Phi_{p}$, for any covariance matrix $\covN \in \R^{p \times p}$, $a\in \R^p$, $B\in \R^{p \times q}$, and $N\sim \mathcal{N}_{p}(0_{p},\covN)$,
    $$
    \mathbb{P}(N\leq a + BV | V)
    =
    \varPhi_{p}\left( a + BV; \covN \right).
    $$
In particular, for globally independent $N_{i,j} \sim \mathcal{N}_{p}(0_{p},\covN_i)$ $(1\leq j \leq h_i, 1\leq i \leq g)$,
\begin{equation*}
\begin{split}
\prod_{i=1}^{g} \varPhi_{p}\left(a_i + B_{i}V; \covN_{i} \right)^{h_i}
&=
\prod_{i=1}^{g}
\prod_{j=1}^{h_{i}}
\mathbb{P}(N_{i,j}\leq a_i + B_i V | V)\\
&=\mathbb{P}(N_{1,1}\leq a_1 + B_1 V, \dots, N_{g,h_{g}}\leq a_g + B_g V | V),
\end{split} 
\end{equation*}
so that, by the law of total expectation,
    \begin{equation*}
    \begin{split}
    \mathbb{E}\left[ \prod_{i=1}^{g} \varPhi_{p}\left(a_i + B_{i}V; \covN_{i} \right)^{h_i} \right]
=
\mathbb{P}(W_{1} \leq 1_{h_{1}} \otimes a_1, \dots, W_{g} \leq 1_{h_{g}} \otimes a_g)
    \end{split}
    \end{equation*}
where
$W_{1}=(N_{1,1}- B_1 V, \dots, N_{1,h_{1}}- B_1 V), 
W_{2}=(N_{2,1}- B_2 V, \dots, N_{2,h_{2}}- B_2 V), 
\dots, W_{g}=(N_{g,1}- B_g V, \dots, N_{g,h_{g}}- B_g V)$. Noting that $\mathbf{W}=(W_1,\dots, W_g)$ is a centred $p H$-dimensional Gaussian random vector, we finally obtain that
    \begin{equation*}
\begin{split}
\mathbb{E}\left[ \prod_{i=1}^{g} \varPhi_{p}\left(a_i + B_{i}V; \covN_{i} \right)^{h_i} \right]
=
\varPhi_{p H}\left(\bm{a};\mathbf{\Sigma}\right),
    \end{split}
\end{equation*}
with $\bm{a}=(1_{h_{1}} \otimes a_1, \dots, 1_{h_{g}} \otimes a_g)$ and $\bm{\Sigma}=(\operatorname{cov}(W_i,W_j))_{i,j \in \{1,\dots, g\}}$.
\end{proof}

Those two general results allow us to derive simple expressions for the expected effect of the inclusion of new datapoints on the $\IBV$ (Proposition \ref{propo_eibv}) and on the $\EMV$ (Proposition \ref{propo_emv}) for which we provide proofs below.

\begin{proof}{(Proposition \ref{propo_eibv})}
Applying Tonelli-Fubini followed by the law of total
expectation first delivers
\begin{equation*}
\begin{split}
\eibv_{[n]}(\bm{x})
&=\int_{\domain}
\currentExp{\futureProba{\gp[\uu]\in
        T}(1-\futureProba{\gp[\uu]\in T})} d\mes(u) \\
&=\int_{\domain} \varPhi_{\no}\left(\bt;
~\futureMean{\uu},
\futureCov{u,u}\right) d\mes(u)\\
&-\int_{\domain} \currentExp{
    \varPhi_{\no}\left(\bt;
    ~\futureMean{\uu},
    \futureCov{u,u}\right)^2
}
d\mes(u), 
\end{split}
\end{equation*}
where $\futureCov{u,u}$ denotes the $\no \times \no$ covariance matrix between all $\no$
responses at point $u$ conditional on the first $n+1$ observation batches.
Now, by using co-kriging update formulae and our shortcut notation for the CDF of centred
multivariate Gaussian vectors, we observe that
\begin{equation*}
\begin{split}
&\varPhi_{\no}\left(\bt;~\futureMean{\uu}, \futureCov{u, u}\right) 
\\
=&
\varPhi_{\no}\left(\bt-\futureMean{\uu}; \futureCov{u, u}\right) \\
=&
\varPhi_{\no}\left(\bt-\currentMean{\uu}-\lambda_{[n+1,n+1]}(u)^T(\gp[\bm{x}_{n+1}]
-\currentMean{\bm{x}_{n+1}}), \futureCov{u, u}\right) \\
=&
\varPhi_{\no}\left(a + BV, \futureCov{u, u}\right),
\end{split}
\end{equation*}
with $a=\bt-\currentMean{\uu}$, 
$B=-\lambda_{[n+1,n+1]}(u)^T$
and $V=\gp[\bm{x}_{n+1}]-\currentMean{\bm{x}_{n+1}}$.
Applying Proposition~\ref{propo2} then delivers that
\begin{equation*}
\begin{split}
&\currentExp{
    \varPhi_{\no}\left(\bt;~\futureMean{\uu}, \futureCov{u, u}\right)^2 
}
=\varPhi_{2\no}
\left(
\left(
\begin{matrix}
\bt-\currentMean{\uu}\\
\bt-\currentMean{\uu}
\end{matrix}
\right);
\mathbf{\Sigma}_{[\stage]}(\uu)
\right),
\end{split}
\end{equation*}
with $\mathbf{\Sigma}_{[n]}(\uu)$ as in the formulation of the proposition. This completes the proof.
\end{proof}

\begin{proof}{(Proposition \ref{propo_emv})}
\begin{equation*}
\begin{split}
\currentEEMV(\bm{x})
&=\int_{\domain^2} 
\varPhi_{2\no}
\left(
(\bt, \bt); \mu((u,v)), 
K((u,v),(u,v))
\right) 
\
\mathrm{d}\mes^{\otimes} 
(u,v)\\
&-\int_{\domain^2} 
\currentExp{
\varPhi_{\no}\left(\bt; \futureMean{\uu}, K_{[n+1]}(\uu, \uu)\right)
\varPhi_{\no}\left(\bt; \futureMean{\vv}, K_{[n+1]}(\vv, \vv)\right)
}
\
\mathrm{d}\mes^{\otimes} 
(u,v)
\end{split}
\end{equation*}
and the proof follows by applying Proposition \ref{propo3}
with
$$V=\gp[\bm{x}_{n+1}]-\currentMean{\bm{x}_{n+1}} \sim \mathcal{N}(0_{q_{n+1}},k_{[n]}(\bm{x}_{n+1},\bm{x}_{n+1}))$$
and $a_1=\bt-\currentMean{\uu}$,
$B_1=-\lambda_{[n+1,n+1]}(\uu)^T$, $a_2=\bt-\currentMean{\vv}$, $B_2=-\lambda_{[n+1,n+1]}(\vv)^T$, $C_1 = \currentCov{\uu, \uu}$, $C_2 = \currentCov{\vv, \vv}$.
\end{proof}
\end{document}